\newtheorem{example}{Example}
\newtheorem{lemma}{Lemma}
\newtheorem{claim}{Claim}
\newtheorem{proposition}{Proposition}
\newtheorem{corollary}{Corollary}
\newtheorem{theorem}{Theorem}
\renewcommand{\k}{\kappa}
\renewcommand{\i}{\iota}
\newcommand{\arrng}{capacity profile\xspace}
\newcommand{\Arrng}{Capacity profile\xspace}
\newcommand{\arrngs}{capacity profiles\xspace}
\newcommand{\Arrngs}{Capacity Profiles\xspace}
\newcommand{\assignment}{matching\xspace}
\long\def\@makecaption#1#2{%
  \vskip\abovecaptionskip
  \sbox\@tempboxa{\textbf{#1}: #2}%
  \ifdim \wd\@tempboxa >\hsize
    \textbf{#1:} #2\par
  \else
    \global \@minipagefalse
    \hb@xt@\hsize{\box\@tempboxa\hfil}%
  \fi
  \vskip\belowcaptionskip}
\newif\ifimages
\definecolor{darkviolet}{rgb}{0.58, 0.0, 0.83} %
\DeclareMathOperator{\ida}{\mathbf{interview-DA}}
\DeclareMathOperator{\mda}{\mathbf{match-DA}}
\newcolumntype{C}{>{$}c<{$}}
\title{Interview Hoarding\footnote{
    We thank Anna Sorensen and Alkas Baybas for raising the question
    that sparked this paper. We also thank
    Alex Chan,
    Adrienne Quirouet,
    Assaf Romm,
    Al Roth, Erling Skancke, Colin Sullivan, 
    William Thomson,
    and seminar audiences at
    North Carolina State University,
    Stanford,
    University of Arizona,
    and University of Lausanne
    for
    helpful comments and discussions. We are grateful for the 
    feedback from Federico Echenique and two anonymous referees
    that helped us improve the paper substantially.
  }}
\author{
\begin{tabular}{c}
Vikram Manjunath\\
{\footnotesize University of Ottawa}\\
{\footnotesize vikramma@gmail.com}
\end{tabular}
\hspace{1in}
\begin{tabular}{c}
  Thayer  Morrill\\
{\footnotesize North Carolina State University}\\
{\footnotesize thayer\_morrill@ncsu.edu}
\end{tabular}
}
\begin{document}

\maketitle
\begin{abstract}
  Many centralized matching markets are preceded by interviews between
   participants.  We study the impact on the final match
  of an increase in the number of interviews for one side of the market. Our
  motivation is the match between residents and hospitals
  where, due 
to the COVID-19 pandemic, interviews for the 2020-21
  season of the National Residency Matching Program  were switched to a virtual format. This 
  drastically reduced the cost to applicants of accepting interview
  invitations. However, the reduction in cost was  not symmetric since
  applicants, not programs, previously bore most of the costs of in-person
  interviews.    We show that  if doctors can  accept more interviews,
  but the hospitals do not increase the number of interviews they
  offer, then no previously matched doctor is better off and many
  are potentially harmed.  This adverse consequence is the  result of what  we
  call \emph{interview hoarding.} We prove this analytically and characterize optimal mitigation strategies for
  special cases. We use simulations to extend these insights to more
  general settings. 

  \bigskip
  \noindent {\footnotesize \textbf{Keywords:} NRMP, Deferred acceptance, Interviews, Hoarding}
\end{abstract}

\section{Introduction}
\label{sec:introduction}
Perhaps the most well-known application of matching theory is the
entry-level labor market for  physicians.  In 2021,  37,470
positions were matched through the National Resident Matching Program
(NRMP).  The matching process consists of two steps.  First,
each physician interviews with a set of  residency programs.  Second,
programs and physicians  submit  rank-order lists of those they interview to a
centralized clearinghouse.
  This clearinghouse, run by the NRMP,
matches  physicians to residency
programs using a version of \citeauthor{GaleShapley:AMM1962}'s
(\citeyear{GaleShapley:AMM1962})
Deferred Acceptance (DA)  Algorithm \citep{RothPeranson:AER1999}.

Both programs and applicants are constrained in the
number of interviews they can take part in. Prior to the COVID-19
pandemic, interviews were done in person. These interviews
were particularly costly for physicians since they not only had to bear
travel expenses but also had to take days off from clinical rotations. The cost to
programs was mainly in terms of time.  For the 2020-21
matching season, interviews were conducted virtually. While this
dramatically decreased the cost of interviews for physicians, it did
not substantially change the costs  for the programs. We are interested in the
implications of this asymmetric change on the eventual match.

Intuitively, it seems possible that a doctor might receive a better
  match if she accepts more interviews.  We show a
  surprising result.  As long as she would have been matched with a program, increasing her interview capacity does not help.  Further,
  we show that even if she would have been
   unmatched, she can only benefit if she also would have been part of a pair (with some program) that blocked the original
  matching.\footnote{While the NRMP match is stable with respect
    to submitted preferences over those one interviews, it may not
    be stable with respect to actual preferences over all possible
    matching partners.}   The match rate for US medical school graduates
  is typically around 94\%.\footnote{Specifically, the match rates in 2017
    through 2020 were $94.3\%$, $94.3\%$, $93.9\%$, and $93.7\%$.
    These figures are from the NRMP's \href{https://mk0nrmp3oyqui6wqfm.kinstacdn.com/wp-content/uploads/2021/05/MRM-Results_and-Data_2021.pdf}{``Results and
    Data:  2021 Main Residency Match''}.
  }  Our result says that at least this proportion of the doctors does
  not benefit, but could be harmed, from increasing the number of interviews they participate
  in.

Increasing the number of interviews a doctor accepts has  a negative
  externality: these interviews can no longer be allocated to other
  doctors.  To illustrate, consider a highly sought after physician:
  one who is offered interviews at the leading 
programs and ends up matched with her favorite program.
When  interviews become cheaper, she accepts more
interviews.\footnote{While we assume that agents have complete
  information, if there were an arbitrarily small amount of
  uncertainty about others' preferences, she would accept additional 
  costless interviews.}
However, as she would  already have matched with her favorite program, the
interviews she accepts are from inferior programs.  These interviews
do not help her: she  ultimately matches with the same program as before.
Her additional interviews are, in effect, wasted.  We refer to this as \emph{interview
hoarding}.  Interview hoarding has a cascading affect.  The physicians who
otherwise would have filled these wasted interview slots now interview
with programs they consider inferior.  Such a  physician may have more
interviews, but she does not have better interviews in a precise sense:
she rates every new interview as worse than the program she
matched with before.

This implies a striking result. If
there is an increase to doctors' interview capacities, but
programs do not react, this increase causes the  ultimate match to be
(Pareto) worse from the matched
physicians' perspective.\footnote{The example in
  \cref{sec: motivating example} demonstrates that  there need not
  exist a Pareto ranking from the   programs' perspective.}   These
doctors fall into  three categories:  those 
who hoard interviews worse than their eventual match; those
who receive more but worse interviews; and those who receive
fewer and worse interviews.  The first category is indifferent between
the new outcome and the old.  The latter two categories are harmed.
Even among unmatched physicians, only  those who fall through the cracks---meaning that they are unmatched but would be welcomed by some
programs---could potentially benefit from an increase.  At a match rate of 94\%, this means fewer than 6\%
of  doctors could possibly gain while the overwhelming majority could
be harmed. 

We are not suggesting that there is no benefit to increasing
  interviews.  We are making two unrealistic assumptions.  First, that
  doctors and programs know their preferences perfectly. Second,
  that agents, faced with a constraint on the number of interviews
  they accept, are not strategic.\footnote{We follow the
    approach of \citet{EcheniqueGonzalezWilsonYariv:2020} in assuming
    complete information about preferences and non-strategic offers
    and acceptance of interviews.}  In reality, of course, neither
  doctors nor programs perfectly know their preferences.  The point of
  interviews is to learn more about a candidate or program.
  Similarly,  both doctors and programs will be strategic in
  the interviews they accept.  We expect a doctor to accept an
  interview with a ``safety'' program, one she is sure to match with,
  while lower ranked programs likely do not invite the very best
  candidates for interviews so that they do not ``waste'' their slots.
  Rather, our results show that these are the only two channels
  through which doctors may gain from increasing the number of
  interviews they accept.  By doing so, a doctor  learns about more
  programs and has more flexibility to strategize.\footnote{If a doctor was previously unmatched but
    formed a blocking pair with a hospital, then she must have
    declined that hospital's interview.  Ex-post, this was a strategic
    mistake.  We interpret the doctor's benefit from increasing her
    interviews and being matched as a strategic benefit. }   The
  advantage of our modeling choices is that they allow us to identify a 
subtle bottleneck created by interviews that would likely be lost in the
analysis of a more complex model.

Having shown that increases to doctors' interview capacities has
adverse welfare consequences, we turn to mitigation policies. We consider
policies that limit the numbers of interviews that programs can offer
and candidates can accept. Though there are essentially no such
policies that \emph{always} (for every preference profile) yield a
stable final \assignment (\cref{prop: necessary for global
  sufficient}), we characterize such policies for 
``common preferences'' (\cref{prop: sufficient equal l k}). These are
salient preference 
profiles where every doctor ranks the programs the same way and every
program ranks the doctors the same way. The policies we characterize
are such that there is a common cap on the number of interviews any
program can offer or any candidate can accept. We also show that if the
programs' interview capacities are fixed, say at $l$, then the number
of blocking pairs increases and the match rate decreases as the
doctors' interview cap moves further away from $l$ in either direction (\cref{prop: target
  interview cap}).

Our analytical results can advise policies for more general
settings where preferences are not quite common, but have a common
component. We use simulations
to show that the lessons from our analytical results hold up under
weaker assumptions. The optimal policy is for doctors and programs to
have the same interview capacity.\footnote{This is true whether we  define the optimality of a policy
  as  maximizing the expected proportion of positions that are filled or
  minimizing the expected number of blocking pairs. In fact, these objectives are equivalent. }

\subsection{Related Literature} While there is a large literature on
the post-interview NRMP match,\footnote{See the multitude of papers following \citet{RothPeranson:AER1999}.} there are relatively few papers that
incorporate the pre-match interview process. One of the first to
explicitly model interviews in the classic one-to-one matching model
is \citet{LeeSchwarz:RAND2017}.  In their  model, before participating
in a centralized, two-sided match, firms learn their preferences over
workers by  engaging in costly interviews.  They show that even if
firms and workers interview with exactly the same numbers of agents,
the extent of unemployment in the final match depends critically on
the overlap between the sets of workers that firms interview. Three other
recent papers that incorporate pre-match interviews include 
\citet{Kadam:2015}, \citet{Beyhaghi:2019}, and \citet{EcheniqueGonzalezWilsonYariv:2020}.

Like our paper, \citet{Kadam:2015} considers the implications of loosened
 interview constraints for doctors. However, the focus is on the strategic
allocation of scarce interview slots. For the sake  of tractability, the
analysis is for a stylized model of large markets. Under the
assumption of common 
preferences over programs, they show that
increasing doctors' capacities may increase total surplus, but not in a
Pareto-improving way. Moreover, the match rate decreases. He
also highlights that when preferences are not necessarily common, the
effect is ambiguous, since increased interview capacities dilute doctors'
signaling ability.

\citet{Beyhaghi:2019} also performs a strategic analysis of a
stylized large market model. However, they consider a slightly
different set up with \emph{application} caps for doctors and
interview caps for programs. While similar, application caps are not
 the same as interview caps: they constrain the number of programs a
doctor can express interest in at the outset of the interview matching
phase, but not the number of interviews she can accept at the end. In their model, inequity in the application caps decreases the expected total
surplus. Moreover, when interview capacity is low, low application
caps are socially desirable.

In our model, agents do not choose interviews strategically.
Determining the optimal set of interviews is closely related to the
portfolio choice problems of
\citet{ChadeSmith:Econometrica2006} and \citet{AliShorrer:2021}.
Both of these  solve for the optimal
portfolio when an agent chooses a portfolio of costly, stochastic
options, but  only consumes one of the realizations. In order to
apply the optimal solution  to 
the interview scheduling problem, one would have to  pin down precisely
the  probability of any given pair matching. This is what makes
strategic analysis of the problem intractable without severe
simplifying assumptions (such as those in the papers we have mentioned
above).

As in
\citet{EcheniqueGonzalezWilsonYariv:2020}, which is
methodologically closest to ours, we sidestep this issue. They
explain a puzzling empirical pattern resulting from the NRMP match:
46.3\% of the physicians were matched to their top ranked residency
programs and 71.1\% where matched to a program they ranked in their
top three.  These statistics seem to contradict surveys indicating that
many doctors have similar preferences over residency programs.  
They provide an explanation for this phenomenon by pointing out the
importance of the interviewing process that precedes the match.
Roughly speaking, the pre-match interviewing process restricts the
preferences that the physicians actually submit to the NRMP.
Therefore, a proper interpretation is not that the physicians matched
with their most preferred programs but rather that they matched with
their most preferred programs \emph{among those they interviewed
with}.

Our work is  complementary with these papers in the sense
that they highlight the importance of understanding the prematch
interviews for properly evaluating the NRMP match itself.

\subsection{Motivating Example}\label{sec: motivating example}
We present the intuition behind  the welfare loss from increased
interview capacity for doctors with a simple example. 
Consider a
market with five doctors $\left\{ d_1,\ldots,d_5 \right\}$ and four
hospitals $\left\{ h_1,\ldots,h_4 \right\}$.  Their preferences
are as follows: 
\begin{center}
\begin{tabular}{CCCCC|CCCC}
d_1&d_2&d_3&d_4&d_5& h_1 & h_2 &h_3& h_4\\
\hline
h_1&h_1&h_1&h_3  &h_4& d_1&d_1&d_1 & d_1\\
h_4&h_2&h_2&h_4&h_3  &d_2&d_2&d_4& d_5\\
h_3&h_4&h_3&h_2 &h_1 &d_3&d_4&d_5&d_3\\
h_2&h_3&h_4&h_1& h_2 &d_5&d_3&d_2& d_2\\
&&&&  &d_4&d_5&d_3& d_4
\end{tabular}
\end{center}
Suppose that the interview capacities of the doctors and hospitals are:
\begin{center}
\begin{tabular}{CCCCC|CCC}
d_1&d_2&d_3&d_4&d_5 &h_1 & h_2 &h_3\\
\hline
2&1&1&1&1&2&1&1
\end{tabular}
\end{center}
Interviews are initially offered by hospitals: $h_1$ invites $d_1$ and
$d_2$, and $h_2, h_3,$ and $h_4$ all  invite $d_1$.
As $d_1$ can accept only two invitations, she turns 
$h_2$ and $h_3$ down.  Hospital $h_2$ then offers an
interviews to
 $d_2 $ and $h_3$ invites $d_4$. Since $d_2$ can only accept one
 interview, she declines $h_2$'s invitation. Then, $h_2$ invites and
 is turned down by $d_4$. Finally, $h_2$ invites
 $d_3$ who accepts the invitation. The final
interviews are:
\begin{center}
\begin{tabular}{CCCCC}
d_1&d_2&d_3&d_4& d_5\\
\hline
\{h_1, h_4\}&\{h_1\}&\{h_2\}&\{h_3\}&\varnothing
\end{tabular}
\end{center}

The final matching is computed by applying the doctor-proposing
Deferred Acceptance algorithm 
 to the agent preferences (restricted to agents they interview
with). The outcome is therefore:
\begin{center}
\begin{tabular}{CCCCC}
d_1&d_2&d_3&d_4&d_5\\
\hline
h_1&&h_2&h_3
\end{tabular}
\end{center}

Now suppose each doctor can accept one more invitation but the
hospitals' interview capacities remain the same. In this case $d_1$
does not reject $h_3$'s invitation in the first round of interview
invitations. Similarly,  $d_2$ does not reject $h_2$'s invitation in the
second round.
The interview
schedule is: 
\begin{center}
\begin{tabular}{CCCCC}
d_1&d_2&d_3&d_4\\
\hline
\left\{h_1,h_3,h_4 \right\}&\{h_1,h_2\}&\varnothing&\varnothing& \varnothing
\end{tabular}
\end{center}
This leads to the  final \assignment:
\begin{center}
\begin{tabular}{CCCC}
d_1&d_2&d_3&d_4\\
\hline
h_1&h_2&&
\end{tabular}
\end{center}
Doctors $d_3$ and $d_4$ are worse off despite being able to accept more
interviews. The only doctor who gains is $d_2$.\footnote{The programs, however, are not unanimously better or worse
off: $h_2$ is better off while $h_3$ is worse off.}
We make a few observations about the winners and losers from the
increased interview capacities:
\begin{enumerate}
\item Doctor $d_2$, the only doctor who gained, was among the doctors
  who were originally unmatched.
\item The original matching was not stable. Each of $d_2$ and $d_5$ 
  was part of a blocking pair ($d_2$ blocked with $h_2$ and $d_5$
  blocked with $h_4$). Specifically, the only doctor to gain was among
  doctors who blocked.
\item Despite being part of a  pair that blocked the original
  matching, $d_5$ did not regret turning down any interview
  invitations, while $d_2$ did.\footnote{By ``regret'' we mean that
    she turned down an interview invitation from a hospital that she
    preferred to her final matching.}  In particular, the
  only doctor who gained \emph{did} previously reject an invitation that she
  ended up regretting.
\end{enumerate}
These observations are not specific to this example. We
show that   (\cref{thm: more not
  better}), in general, a doctor can only
gain from an increase to interview capacities if she was originally
unmatched,  she was part of a blocking pair,
\emph{and} she regretted rejecting an interview invitation.

\section{The Model}\label{sec: market}
A \textbf{market}
consists of a triple {\boldmath $(D,H,P)$}, where $D$ is a finite set of
\textbf{doctors},  $H$ is a finite set of \textbf{hospitals}, and $P$
is a profile of strict \textbf{preferences} for the doctors and
hospitals.  For each $h\in H$, $\mathcal P_h$ is the
set of strict preferences over $D\cup\{h\}$, and for each $d\in D$,
$\mathcal P_d$ is the set of strict preferences 
over $H\cup\{d\}$.
The set of preference profiles is
$\mathcal P \equiv \times_{i\in H\cup D} \mathcal P_i$.

There are two phases to the matching process. The first is a
decentralized interview phase and the second is the centralized  matching phase. The former involves  many-to-many
matching  while the latter is a standard one-to-one matching problem \citep{RothSotomayor:1990}.

A \textsl{many-to-many} matching is a
function 
 \mbox{$\nu: H \cup D \rightarrow 2^{H\cup D}$} such that, for each
 $d\in D$ and $h\in H$, $\nu(d) \subseteq H$, $\nu(h) \subseteq D$, and $h \in \nu(d)$ if
and only if $d\in \nu(h)$.

For each $h\in H$, let ${ \i_h}\in \mathbb N$ be
\textbf{\boldmath $h$'s interview   capacity}. Similarly, for each
$d\in D$, let $\k_d\in \mathbb N$ be \textbf{\boldmath 
  $d$'s interview
  capacity}. We call the profile $(\i,\k) = ((\i_h)_{h\in H}(\k_d)_{d\in D})$ the
  \textbf{interview \arrng}. An \textbf{interview \assignment} is a many-to-many
  matching $\nu$ such that for every doctor $d$, $|\nu(d)|\leq
  \k_d$ and for every hospital $h$, $|\nu(h)|\leq \i_h$.
  
 An interview \assignment $\nu$ is 
  \textbf{pairwise stable} if there is no
  doctor-hospital pair $(d,h)$ such that $h\not\in \nu(d)$ but:  
  \begin{itemize}
	\item either $|\nu(h)|<\i_h$ and $d\mathrel P_h h$ or there exists a $d'\in \nu(h)$ such that $d \mathrel{P_h}  d'$, and
	\item either $|\nu(d)|<\k_d$ and $h\mathrel P_d d$ or there exists a $h'\in \nu(d)$ such that $h \mathrel{P_d} h'$.
\end{itemize}
  
A matching is a function $\mu:H \cup D \rightarrow H \cup D$ such that
 $\mu(h)\in D\cup\{h\} $, $\mu(d)\in H \cup\{d\}$, and $\mu(d)=h$ if
 and only if $\mu(h)=d$. We say that the matching $\mu$ is
 \textbf{individually rational} if for each $i\in D\cup H, \mu(i)
 \mathrel R_i i$.
The pair $(d,h)$ 
 \textbf{blocks}  the matching $\mu$ if $h\mathrel P_d \mu(d)$
and $d\mathrel P_h \mu(h)$.  A matching is \textbf{stable} if it is
individually rational and is not blocked by any pair.

To describe how the market works, we follow the approach of
\citet{EcheniqueGonzalezWilsonYariv:2020} by assuming complete
information and non-strategic behavior.\footnote{In
  \cref{sec:role-interviews-play}, we discuss a 
  version of our model and some of our results when preferences are
  formed during the interviews.} This means that hospitals na\"ively make
offers to their most preferred doctors and these offers, if rejected,
trickle down to less preferred doctors. Thus, given $(\i,\k)$ and
$P\in \mathcal P$, the  final
\assignment, which we call the
\mbox{\textbf{\boldmath$(\i,\k)$-\assignment}}, is the outcome of the
following two phase
process:\footnote{We only differ from
  \citet{EcheniqueGonzalezWilsonYariv:2020} in that we set the
  interview matching to be the \emph{hospital-optimal} many-to-many stable
  matching, while they set it to be the \emph{doctor-optimal} one. Their
   choice is appropriate for the question they ask. However, we have chosen
  to approximate the decentralized interview phase through the hospital-proposing
  DA. This difference does not drive  our results, as we explain in
  \cref{sec:doct-optim-interv}.} 
\begin{enumerate}
\item [Phase 1:] The interview \assignment $\nu$ is the
  hospital-optimal pairwise stable
  many-to-many matching where the capacities of the hospitals
  and doctors are given by $\i$ and $\k$, respectively. This can be
  computed by applying  the hospital-proposing
  deferred acceptance (DA) algorithm: each $h\in H$ is matched with up
  to $\i_h$ doctors 
  and each $d\in D$ is matched with up to $\k_d$ hospitals. Since we
  ignore the informational aspect of the 
  problem, the input to DA is a choice function for each agent that is
  responsive to her preference relation and constrained by her
  interview capacity.\footnote{For the sake of completeness, we define in
    \cref{sec:choice-funct-interv} the acceptant
    and responsive choice functions that we appeal to while running DA
    to compute the interview \assignment.
    According to these choice functions, hospitals na\"ively select  their
    most preferred doctors from each set. In \cref{sec:2021-match}, we consider alternative choice
    functions that reflect common heuristics. Simulations involving
    such a heuristic based choice  are closer to the results of
    the 2021 match, but the qualitative effects are similar. The
    simpler assumption of na\"ive behavior renders the results more
    transparent.}
  The hospital-proposing DA algorithm is an approximation of the 
  decentralized process by which hospitals invite doctors,
  extending invitations to further doctors when invitations are
  declined.
\item [Phase 2:] The $(\i,\k)$-\assignment  is
  computed by applying the    doctor-proposing DA algorithm. The input
  to DA is the true 
  preference profile restricted to the 
  interview match, $(P_i|_{\nu(i)})_{i\in D\cup H}$.
\end{enumerate}

The deferred acceptance algorithm is used twice.  To avoid confusion, we refer to running the $\ida$ and the $\mda$ algorithms.

\section{Welfare Impact of Increased Interviews}\label{sec: main}
Our aim is to study how a change in the interview costs impacts a
market.  It is intuitive that when doctors interview with more
hospitals that the interviewing market becomes more competitive.
However, it is not clear what the impact on a doctor's
ultimate assignment is.  We expect a doctor to benefit from her
increasing her interviews but to be harmed when other doctors also
increase the interviews; therefore, the ultimate impact is
ambiguous. 

We show that in fact only certain doctors who were
previously unmatched can possibly benefit from additional
interviews.\footnote{As a reminder, we are assuming that doctors have
  perfect information and are nonstrategic.  While doctors can benefit
  from the increased information and easing of strategic constraints,
  our result say that these are the only ways a doctor can 
  benefit.}  Specifically, to gain from 
additional interviews, a doctor must have been previously unmatched,
been part
of a blocking pair for the original matching, \emph{and} regretted
turning down an interview invitation.\footnote{By
  ``regretted'' we mean that this unmatched doctor rejected an
  interview proposal in the first phase from a hospital that she would
  have matched with if she had not rejected it.}

 In 2020, 93.7 of
doctors graduating from a US medical schools where matched to a
program by the NRMP.\footnote{This number is from the NRMP's \href{https://mk0nrmp3oyqui6wqfm.kinstacdn.com/wp-content/uploads/2021/05/MRM-Results_and-Data_2021.pdf}{``Results and
    Data:  2021 Main Residency Match''}. The
  2020 match rate was in fact slightly lower than in previous years.
  In 2016 through 2020, the match rates were $93.8\%$, $94.3\%$,
  $94.3\%$, $93.9\%$, and $93.7\%$, respectively.}  Therefore, few doctors could
benefit from increasing interviews while potentially many could be
harmed.

\begin{theorem}\label{thm: more not better} Suppose that for each
  $d\in D,\; \k_d'\geq \k_d$, and let $\mu$ and $\mu'$ denote the
  $(\i,\k)$-matching and $(\i,\k')$-matching, respectively. If
  $\mu'(d)\mathrel P_d \mu(d)$ then $\mu(d) = d$, $(d, \mu'(d))$ blocks
  $\mu$, and for each $h\in \nu(d), h\mathrel P_d \mu(d)$.

  That is, a doctor
  can only gain from increased interview capacities prefer if she 
  \begin{inparaenum} [(i)]
\item  was unmatched,
\item was part of a blocking pair, and 
\item regretted turning down an interview from a hospital she blocked with. 
\end{inparaenum}

\end{theorem}

Stability of the original matching is a natural notion
of equilibrium in  a well functioning market. An implication of
\cref{thm: more not better} is that if such an equilibrium were
shocked with increased interview capacities, the consequence would be
a Pareto-worsening of the match from the doctors' perspective.

\begin{corollary}\label{cor: more not better} Suppose that for each
  $d\in D,\; \k_d'\geq \k_d$, and let $\mu$ and $\mu'$ denote the
  $(\i,\k)$-matching and $(\i,\k')$-matching, respectively.  If $\mu$
  is  stable, then for every $d\in D$, $\mu_d \mathrel{R_d}
  \mu'_d$.\end{corollary}

We show a series of lemmas to prove \cref{thm: more not better}. In what follows, let $\nu$ and $\mu$ be the interview and final matchings, respectively, under $(\i,\k)$. 
Similarly, let $\nu'$ and $\mu'$ be the interview and final matchings
under $(\i,\k')$. We frame the temporal language below in reference to
a hypothetical change  in doctors' 
interview capacities from $\k$ (``before'') to $\k'$ (``after'').  As
a reminder, we use $\ida$ and $\mda$ to refer to the algorithms for
computing the interviews and the final matching respectively. 

We first establish a number of properties of the interview
matchings. The intuition for these results comes from one of the
classical results in two-sided matching theory:  When 
the set of men increases, no man benefits from this increased competition while no woman is harmed.\footnote{See Theorem 2.25 of \citet{RothSotomayor:1990}.}  In our setting, an
increase in the number of interviews a doctor accepts plays the role of additional men participating in the market. This means that the hospitals are able to interview
better doctors. However,  there is a tension between
interviewing better doctors and interviewing the ``right''
doctors. Thus improving the set of candidates a hospital interviews 
does not necessarily translate to an improvement in its eventual match. 

We do not use our first lemma  directly in the proof of \cref{thm:
  more not better}, but it is the key to understanding the effect of additional interview capacities on the interviewing outcomes.

\begin{lemma}Suppose that for each $d\in D,\; \k_d'\geq \k_d$.  If
  $h\in \nu(d)$ (hospital $h$ interviews   $d$ under capacities
  $\k$), then $d$ does not reject $h$ when  $\ida$ is run with capacities
  $\k'$. That is, no doctor rejects a hospital that previously
  interviewed her. \label{norejects} 
\end{lemma}
\begin{proof} Suppose not.  When $\ida$ is run (with capacities $k'$),
  let $d$ be the first doctor to reject a hospital $h$ that
  interviewed her under capacities $\k$.  As $\k'_d \geq \k_d$, $d$ must have received a new
  interview proposal from some hospital $h'$.  As $h'$ did not propose to $d$
  when capacities where $\k$, it must have been rejected by some doctor $d'\in
  \nu(h)$, a doctor it previously interviewed.  But
  this contradicts $d$ being the first doctor to reject a hospital she
  previously interviewed with. 
\end{proof}

We cannot say whether a doctor ``prefers'' her interviews under $\k$
versus $\k'$ as we only have a doctor's preferences over individual
hospitals and not sets of hospitals.  However, we  show---in a
specific sense---that while a doctor may get new interviews, she does
not get better interviews. 

\begin{lemma}\label{nobetter} Suppose that for each $d\in D,\; \k_d'\geq \k_d$.  For every $d\in D$, if $h'\in
  \nu'(d)\setminus \nu(d)$, then $\forall h\in \nu(d),
    h\mathrel P_d h'$.  That is, any new interview a doctor receives is worse than all of her prior interviews.
  \end{lemma}	
\begin{proof}
  Suppose $d$ receives an interview proposal from some $h\notin
  \nu(d)$.  If $h$ did not propose an interview to $d$ under $\k$, then $h$ must
  have been rejected by a doctor that it previously
  interviewed. However, this contradicts \cref{norejects}.  If $h$ did
  propose an interview to $d$ under $\k$, then since $h\notin\nu(d)$,
  $d$  rejected $h$'s interview proposal. By revealed preference, for each $h'\in \nu(d)$, $h'\mathrel P_d h$. 
\end{proof}

In particular,  if a doctor was previously matched to a hospital, then every new interview she receives is worse than her previous assignment.  In the classical result, no man benefits from the increased
competition due to additional men and also no woman is harmed.  An
analogous result holds in our framework.  A hospital either has the
same set of interviews, additional interviews, or it interviews new
doctors it prefers to its previous interviews.  In each of these scenarios,
the hospital's set of interviews  (weakly) improves.  The next lemma
shows that whenever a program interviews a new doctor, the program
``keeps'' all the interviews with doctors it prefers.

\begin{lemma}\label{oldisnew}
Suppose that for each $d\in D,\; \k_d'\geq \k_d$.  For every $h \in H$, if $d'\in \nu'(h)$,
$d \in \nu(h)$, and  $d \mathrel P_h d'$, then $d \in \nu'(h)$. That is, if a hospital interviews a doctor $d$, it interviews every
doctor it used to interview  among those that it prefers to $d$.
\end{lemma}

\begin{proof}
  
Since $d \mathrel P_h d'$, $h$ proposes an interview to $d$ before it
proposes an interview  to $d'$ under $\k'$.  By
\cref{norejects}, $h$ is not rejected by any doctor it previously
interviewed.  As $h$ proposes to $d'$ under $\kappa'$, it must have already
proposed to but  not have been rejected by $d$.  Therefore, $h$ continues to
interview $d$.  
\end{proof} 

Having established the above properties of the interview matching, we
can now consider the \assignment phase. We start by showing that
rejections in $\mda$ are monotonic with
regards to doctors' interview capacities.

\begin{lemma}
  Suppose  doctor $d$ is rejected by
  hospital $h$ in when $\mda$ is run with interviews $\nu$. If $h\in
  \nu'(d)$, then $h$ rejects $d$ when $\mda$ is run with interviews
  $\nu'$.  That is, a hospital continues to reject any doctor that it previously rejected.
  \label{rejection monotonicity}
\end{lemma}

\begin{proof}
We  prove a stronger statement.  We  prove that if $d$ is
rejected by $h$ in round $n$ when $\mda$ is run under $\nu$,
and if $h\in \nu'(d)$, then $h$ rejects $d$ in round $n$ or earlier
when $\mda$ is run under $\nu'$.  We proceed by induction on $n$.
For the base step, consider a doctor $d$ who was rejected by hospital
$h$ in the first 
round of $\mda$ under $\nu$.  Let $d'$ be the doctor that
$h$ tentatively 
accepted when she rejected $d$.  By assumption, \mbox{$d\in \nu'(h)$.}
By \cref{oldisnew}, 
since $h$ prefers $d'$ to $d$ and $h$ interviews $d$ under $\k'$, $h$
also interviews $d'$ under $\k'$ ($d'\in \nu'(h)$).  Moreover, by
\cref{nobetter}, any new interview $d'$ receives is worse for her
than $h$.  Therefore, $d'$ still  proposes to $h$ in the first
round under the new 
capacities and $h$ still rejects $d$ in favor of a doctor it finds at
least a good as $d'$.

Our inductive hypothesis is that for any $d\in \nu'(h)$, any $n\geq2$,
and any $m<n$, if $h$  rejected $d$ in round $m$ of $\mda$ under
$\nu$, then $h$ rejects $d$ in round $m$  of  $\mda$ under $\nu'$ or earlier.

First we show that for any doctor $d$ and any hospital $h$ that
interviews $d$ under both $\k$ and $\k'$ ($h\in \nu(d) \cap
\nu'(d)$), if $d$ proposes to hospital $h$ in round $n$ or earlier when
$\mda$ is run under $\nu$, then $d$ proposes to $h$ in round $n$  of
 $\mda$ or earlier under $\nu'$.  Consider any $h' \in \nu'(d)$ such that $h'
\mathrel{P_d} h$.   By \cref{nobetter}, $h'$ cannot be a new interview
($h'\in \nu(d)$).  Since when $\mda$ is run under $\nu$, $d$ proposes to
$h$ in round $n$ or earlier, $h'\in \nu(d)$, and $h' \mathrel{P_d} h$,
$d$  proposes to and is rejected by $h'$ prior to round $n$.  By the
inductive hypothesis, when $\mda$ is run under $\nu'$, $h'$
rejects $d'$ prior to round $n$.  Therefore, $d'$ proposes to $h$ by
round $n$. 

To complete the induction step,  suppose that hospital $h$ rejected
doctor $d$ in favor of 
doctor $d'$ in round $n$ under $\nu$.  By assumption, $d \in \nu'(h)$. By
\cref{oldisnew}, since $h$ continues to interview $d$ but prefers
$d'$, $h$ also continues to interview $d'$.  Since $h\in \nu(d) \cap
\nu'(d)$ and $h\in \nu(d') \cap \nu'(d')$ and both $d$ and $d'$
propose to $h$ in round $n$ of $\mda$ or earlier  under $\nu$, we have
shown that both $d$ and $d'$ propose to $h$ by round $n$ in $\mda$
under $\nu'$.  Thus, by round 
$n$, under $\nu'$, $h$ receives a proposal it prefers to
$d$.  Therefore,  $h$  rejects $d$ under
$\nu'$  in round $n$ or  earlier. 
\end{proof}

We are now ready to prove \cref{thm: more not better}.

\begin{proof}[Proof of \cref{thm: more not better}]
Consider any doctor $d\in D$.  First, suppose that $\mu(d)=h \in H$.
Consider any $h'$ such that $h' \mathrel{P_d} h$.  We  show that
$\mu'(d) \neq h'$.  If $h' \not\in \nu'(d)$, then we are done.
If $h' \in \nu'(d)$,  by \cref{nobetter}, $h' \in
\nu(d)$ (all new interviews are worse than $h$).  Since
$h'\mathrel{P_d} h$, $h'\in \nu(d)$,  and $\mu(d)=h$, when $\mda$ is run
under $\nu$, $d$ proposed to $h'$ and was rejected.  By
\cref{rejection monotonicity}, when $\mda$ is run under $\nu'$, $h'$ 
rejects  $d$.  Therefore, $\mu'(d)\neq h'$.  We conclude that no
doctor who was matched by $\mu$ is matched to a better
hospital by $\mu'$. 

Now suppose $\mu(d)=d$ but that $\mu'(d)=h\in H$.  We  show that
$d$ and $h$ block $\mu$.  If $\mu(h)=h$, then since $\mu'$ is
individually rational, $d$ and $h$ block $\mu$.  If
$\mu(h)=d' \in D$, since no doctor who was matched by $\mu$ is matched
to a better hospital by $\mu'$ and  since $\mu'(d') \neq h=\mu(d')$,
we deduce that $h
\mathrel{P_{d'}} \mu'(d')$.  If $d'\mathrel  P_h d$, then by
\cref{oldisnew}, $d'\in \nu'(h)$ ($h$ interviewed $d'$ before, so since
$h$ now interviews doctor $d$ whom it likes less, it continues to
interview $d'$).  But this contradicts the stability of $\mu'$ at the
restricted preferences  since
$d'$ and $h$ interview under $\nu'$ and prefer each other to their
respective assignments.  Therefore, it must be that $d \mathrel P_h d'$, and
indeed, $d$ and $h$ block $\mu$. 
Finally, we show that $d$ regretted rejecting $h$. Since $h$ and $d$
block $\mu$ and $d$ was unmatched, $d\notin \nu(h)$. Moreover,
since  $d\mathrel{P_h} \mu(h)$, $h$ proposed to $d$ when $\ida$ was
run under capacities $\k$ and $d$ rejected $h$: for each $h'\in
\nu(d), h'\mathrel P_d h$.
\end{proof}

\cref{thm: more not better} tells us that doctors increasing
the number of interviews they accept has little scope for improving doctor welfare but great potential for harm.
  The  example in \cref{sec:
  motivating example} 
illustrates  that only certain unmatched doctors can gain from
increased capacities. This example is not pathological.
\cref{norejects} and \cref{nobetter} highlight  the root cause of the 
inferior match, which is interview hoarding.
The set of doctors who escape the
adverse effects of an increase to capacities  are a subset of
unmatched doctors. When the match rate is high, this set is small.

Under our simplifying assumptions that agents are non-strategic and
have complete information, \cref{thm: more not better} implies that
the shift to virtual interviews for the 2020-21 season of the NRMP
ought to have led to an 
inferior matching. In \cref{sec:2021-match}, we contrast simulation results for
this na\"ive behavior with the common heuristic of including a ``safety'' candidate when choosing a set. While the NRMP has touted the high
match rate for 2021, this may be driven by hospitals being matched
to safety candidates (under  heuristic behavior) rather than being
unmatched (under na\"ive behavior). Other than the match rate, heuristic
choice does not qualitatively affect the  results.

\section{\Arrngs That Ensure Stability }
As stated in \cref{cor: more not better}, when the initial \arrng
leads the 
two-phase process to a stable \assignment, no doctor benefits from
increased capacities.
Thus, it is important in making policy choices related to interview
capacities to understand what capacity profiles lead to a stable
matching. Of course, the
answer   depends on specifics of the market, such as
the ratio of doctors to hospitals and how correlated or aligned
preferences are.  However, we are able to provide tight
characterizations for certain ``end-point'' cases that provide
intuition for more general markets.

\subsection{Stability for All Preferences}
\label{sec:glob-suff-regim}

In studying stability of the two-phase process, we first discuss worst case performance:
what \arrngs yield stable matchings \emph{for every} preference profile? It
turns out that only very extreme \arrngs satisfy this property. We
characterize these \arrngs in our next result.

\begin{proposition}
  \label{prop: necessary for global sufficient}
  \Arrng $(\i, \k)$ yields a stable matching for all preferences if and only if  either
  \begin{enumerate}
  \item   every doctor and every hospital has only unit interview
    capacity---that is, for each $d\in D,  \k_d = 1$
    and for each $h\in H, \i_h = 1$---or
  \item every doctor and every hospital has high   interview capacity---that
    is, for each $d\in D,  \k_d \geq \min\{|D|,|H|\}$
    and for each $h\in H, \i_h \geq \min\{|D|,|H|\}
    $.
  \end{enumerate}
\end{proposition}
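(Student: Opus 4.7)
I would prove each direction separately, with the ``if'' direction being the technical heart. For the ``if'' direction, Case 1 ($\k_d = \i_h = 1$ for all agents) is immediate: the many-to-many DA in step 1 collapses to one-to-one hospital-proposing DA on the true preferences, so $\nu$ is itself the hospital-optimal stable matching and is stable; step 2 then returns $\nu$ because every agent's restricted preferences contain at most one partner. For Case 2 ($\k_d, \i_h \geq \min\{|D|,|H|\}$ for all agents), assume without loss of generality that $|D| \leq |H|$; the other case is symmetric. Because each hospital has interview capacity at least $|D|$, every hospital eventually proposes to every doctor during step 1, so each doctor holds her top $\min(\k_d, |H|) \geq |D|$ offers, and thus $\nu(d)$ is a prefix of $d$'s true preferences of length at least $|D|$. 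I then show the final matching $\mu$ is stable by contradiction: suppose $(d, h)$ blocks. If $h \in \nu(d)$, the standard DA rejection argument applied to step 2's restricted preferences contradicts blocking. If $h \notin \nu(d)$, then $h$ is ranked below $d$'s top $|D|$ hospitals, so $h \mathrel{P_d} \mu(d)$ forces $\mu(d) = d$ (unmatched); each of $d$'s top $|D|$ hospitals then rejected $d$ during step 2 and, by monotonicity of hospitals' holdings in doctor-proposing DA, is matched to a \emph{distinct} doctor strictly preferred to $d$, yielding an injection from $|D|$ hospitals into $D \setminus \{d\}$---a pigeonhole contradiction. The symmetric case $|D| > |H|$ is analogous, with the pigeonhole mapping $|H|$ doctors into $H \setminus \{h\}$.

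For the ``only if'' direction, I verify the contrapositive: whenever $(\i, \k)$ is neither in Case 1 nor Case 2, I exhibit a preference profile at which the $(\i, \k)$-matching is not stable. The negation splits into (a) some agent has capacity $1$ while another has capacity $\geq 2$, and (b) all capacities are $\geq 2$ but some is strictly less than $\min\{|D|, |H|\}$. In case (a), I construct a profile in which the high-capacity agent hoards an interview slot (in the sense of \cref{thm: more not better}) from a pair that would otherwise have matched, producing a blocking pair---this parallels the construction in \cref{sec: motivating example}. In case (b), I use a rotationally shifted preference profile so that each doctor's top choices interview her competitors, leaving mutually acceptable pairs unconsidered in step 1 and hence blocking the final matching.

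The main obstacle is the pigeonhole step in Case 2 of the ``if'' direction, particularly ensuring that each of $d$'s top $|D|$ hospitals is genuinely matched (not unmatched) after rejecting $d$; this relies on the fact that once a hospital holds any doctor in doctor-proposing DA, its held partner only improves over time, ruling out the hospital ending unmatched. The counterexamples in the ``only if'' direction are routine but must enumerate several sub-cases distinguishing which capacities are small and which are large.
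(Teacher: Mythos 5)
Your proposal is correct and follows essentially the same route as the paper: the sufficiency direction uses the identical pigeonhole argument (the paper phrases it as ``there are not enough such doctors/hospitals''), and your contrapositive split for necessity into (a) a unit-capacity agent coexisting with a capacity-$\geq 2$ agent and (b) an intermediate capacity, with a hoarding construction for (a) and a rotated profile for (b), mirrors the paper's Claims 1 and 2 exactly. The only caveat is that your counterexample constructions are sketched rather than written out, but the ideas you describe are the ones the paper instantiates.
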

\begin{proof} This result is trivial if $|D|=1$ or $|H|=1$, so suppose
  that $|D|\geq 2$ and $|H|\geq 2$.
  We first prove necessity. Suppose that $(\i,\k)$ is yields a stable matching for all preferences.

    We start by establishing that if one doctor or hospital has greater than
    unit interview capacity, then every doctor and hospital has
    interview capacity of at least two. Stated differently,
    if any doctor or hospital has unit
    capacity, then all doctors and hospitals have unit capacity. We
    denote by $\nu$ the interview \assignment and by $\mu$ the
    $(\i,\k)$-\assignment. 

    \begin{claim}\label{claim: small capacity}
    \begin{enumerate}      
    \item If there is  $d\in D$ such that     $\k_d > 1$, then
      for each $d'\in D, \k_{d'} \geq 2$ and for each $h\in H, \i_h \geq
      2$, and 
    \item If there is  $h\in H$ such that     $\i_h > 1$, then
      for each $h'\in H, \i_{h'} \geq 2$ and for each $d\in D, \k_d \geq
      2$.
    \end{enumerate}
  \end{claim}
  \begin{proof}
    We prove only the first statement as the proof of the second
    statement is analogous---it requires only a reversal of the roles
    of doctors and hospitals.

    Suppose, for the sake of contradiction,  that $(\i,
    \k)$ always yields a stable matching and there are $d_1\in
    D$ such that $\k_{d_1} >1$ and  $h_2\in H$ such that $\i_{h_2} =
    1$. Let $h_1 \in H\setminus\{h_2\}$ and $d_2 \in D\setminus
    \{d_1\}$. Consider $P\in \mathcal P$ where each doctor ranks $h_1$
    first and $h_2$ second, and each hospital ranks $d_1$ first and
    $d_2$ second.  All hospitals offer  an interview to $d_1$ and as
    $\k_{d_1}>1$, $d_1$ accepts interviews from at least $h_1$ and
    $h_2$.  Since $\i_{h_2}=1$, $h_2$ only interviews  $d_1$.  Let
    $\mu$ be the $(\i,\k)$-\assignment.  Since $(\i,\k)$ always yields
    a stable matching, $\mu$ is stable, so
    $\mu(d_1)=h_1$, as $h_1$ and $d_1$ are mutual favorites.  Therefore,
    $\mu(h_2)=h_2$ as $h_2$ only interviews  $d_1$.  Note that
    $(d_2,h_2)$ forms a blocking pair of $\mu$ as $h_2
    \mathrel{P_{d_2}} \mu(d_2)$, since $\mu(d_2) \not \in \left\{ h_1,h_2
    \right\}$, and $d_2 \mathrel{P_{h_2}} h_2$.  This
    contradicts the stability of $\mu$ and thus the assumption that
    $(\i,\k)$ always yields a stable matching. We have therefore established
    that if there is $d\in D$ such that $\k_d > 1$, then for each
    $h\in H$, $\i_h \geq 2$.

    We now prove that if there is a $d_1\in D$ such that $\k_{d_1} > 1$,
    then for each $d\in D$, $\k_d\geq 2$. Suppose for the sake of
    contradiction that there is 
    $d_2\in D$ such that $\k_{d_2} = 1$. Let $h_1,h_2\in H$.    
    Consider $P\in \mathcal P$ such that each doctor ranks $h_1$
    first and $h_2$ second, and each hospital ranks $d_1$ first and
    $d_2$ second.  As we have shown above, $\i_{h_1}, \i_{h_2} \geq
    2$, so both $h_1$ and $h_2$ offer interviews to both
    $d_1$ and $d_2$. Since $h_1$ is her favorite hospital, 
    $d_2$ accepts  its offer. Thus, $\nu(d_2) = \{h_1\}$.
    However,  $\mu(d_1) = h_1$ since $d_1$ and $h_1$ are mutual
    favorites, so $\mu(d_2)=d_2$.  This means that $(d_2,h_2)$
    form a blocking pair of $\mu$ as the only hospital $d_2$ prefers
    to $h_2$ is $h_1$.  This contradicts the stability of $\mu$ and
    thus the assumption that $(\i,\k)$ always yields a stable matching. 
      \end{proof}

We complete the proof of necessity by showing that neither a doctor nor a
hospital can have an intermediate capacity. 
\begin{claim}\label{claim no intermediate caps}
There is no $d\in D$ such that $1<\k_d <\min\{|D|,|H|\}$, and there is
no hospital $h$ such that $1<\i_{h}<\min\{|D|,|H|\}$.   
\end{claim}

\begin{proof}
We prove this statement for the case where  $|D| \leq |H|$. The proof
when $|H| < |D|$ is symmetric.

Suppose for the sake of  contradiction that $d_1\in D$ is  such that $\k_{d_1} = k$
where $1 < k < |D|$. Let $P\in\mathcal P$ be such that  for $i$ from 1 through $k+1$:
\begin{align*}
P_{d_1} &: h_2,h_3,\ldots ,h_{k+1},h_1,\ldots \\
P_{h_i} &: h_i,h_1,\ldots ,h_{i-1},h_{i+1},\ldots \\
P_{h_1} &: d_1,d_2,\ldots \\
P_{h_i} &: d_i,d_1,\ldots ,d_{i-1},d_{i+1},\ldots
\end{align*}
We have
constructed the 
preference profile $P$   such that:
\begin{itemize}
	\item For each $i$ from 1 through $k+1$, $d_i$ and $h_i$ are matched in every stable \assignment.
	\item Each  of the $k+1$ hospitals  $h_1,\ldots, h_{k+1}$  offers $d_1$ an interview.
	\item Doctor $d_1$ accepts  interview offers from hospitals
          $h_2,\dots, h_{k+1}$, but not from $h_1$. 
\end{itemize}
The first and third points are immediate consequences of the
preferences.  The second is a consequence of the
first part of 
\cref{claim: small capacity}: since $\k_{d_1}>1$, every hospital has
an interview capacity of at least two and ranks $d_1$ in its top two.  However, this contradicts the
definition of $\mu$ as the $(\i,\k)$-\assignment, since $h_1\notin
\nu(d_1)$ yet by stability, $h_1 = \mu(d_1)$.
  
A similar construction shows that there is no $h\in H$ such
that 
$1<\i_{h}<|D|$.  Suppose for the sake of  contradiction that $h_1\in H$ is such
that $\i_{h_1} = l$ where $1 < l < |D|$.
Let  $P\in\mathcal P$ be such that for $i$ from 1 through $l+1$:
\begin{align*}
P_{d_1} &: h_1,h_2,\ldots \\
P_{d_i} &: h_i,h_1,\ldots ,h_{i-1},h_{i+1},\ldots \\
P_{h_1} &: d_2,d_3,\ldots ,d_{l+1},d_1\\
P_{h_i} &: d_i,d_1,\ldots ,d_{i-1},d_{i+1},\ldots
\end{align*}
By the second part of \cref{claim: small capacity}, since
$\i_{h_1}>1$, every doctor has a capacity of at least two.  Therefore:
\begin{itemize}
	\item For each $i$ from 1 through $l+1$, $d_i$ and $h_i$ are matched in every stable \assignment.
	\item Each  of the $l$ doctors  $d_2,\ldots, d_{k+1}$
          accepts an interview from  $h_1$.
	\item Hospital $h_1$ does not offer  $d_1$ an   interview.
\end{itemize}
Thus, $h_1\notin \nu(d_1)$, so $h_1\neq \mu(d_1)$. This contradicts
the stability of $\mu$, the $(\i,\k)$-\assignment, and in turn the
assumption that $(\i,\k)$ always yields a stable matching.
\end{proof}

We now turn to sufficiency. If every agent has an interview capacity
of one, then the interview \assignment is actually a
matching. Moreover, it is a stable matching. So, suppose that each
agent has an interview capacity of at least $\min\{|D|, |H|\}$. If
$|D| = |H|$, then the interview \assignment involves an interview
between every mutually acceptable doctor-hospital pair. This means
that the $(\i,\k)$-\assignment is the doctor optimal stable matching
under unrestricted preferences, which is stable. We now show, that
even if $|D| < 
|H|$ or $|D| > |H|$, the $(\i,\k)$-\assignment, $\mu$, is stable. 
Suppose the doctor-hospital pair $(d,h)$ blocks
$\mu$. By defintion of $\mu$ as the $(\i,\k)$-\assignment, if
$h\mathrel P_d \mu(d)$ and $d \mathrel P_h \mu(h)$, then $h\notin
\nu(d)$.

Suppose $|D|<|H|$. Since
$\i_h \geq |D|$, $h$ would have offered an interview to $d$ and would have
been rejected when $\ida$ is run, so $\nu(d)$ contains $\k_d$ hospitals that $d$ prefers to
$h$. Since $h \mathrel P_d \mu(d)$, and $\mu(d)\in \nu(d) \cup\{d\}$,
this means $\mu(d) = d$. Then, $d$ is rejected by every hospital in
$\nu(d)$ when $\mda$ is run. However,
$|\nu(d)| = \k_d \geq  |D|$ and since $d$ is acceptable to every
hospital in $\nu(d)$, she is only rejected when another doctor
applies. However, this implies that when DA terminates in the matching
phase, every hospital 
in $\nu(d)$ has tentatively accepted some doctor other than $d$, which
is a contradiction---there are not enough such doctors.

Suppose $|H| < |D|$. Since $\k_d \geq |H|$, $d$ does not reject any
interviews she is offered. Since $h\notin \nu(d)$,  $h$ offers
interviews to and has them accepted by $\i_h \geq |H|$ doctors whom it
prefers to $d$. Since $d\mathrel P_h \mu(h)$, $h$ does not receive a
proposal from any $d'\in \nu(h)$ when $\mda$ is run since it finds all
such $d'$  better than $d$. This 
implies that  each $d'\in \nu(h)$ is tentatively accepted by some
hospital other than $h$ when DA terminates, which is a
contradiction---there are not enough such hospitals.
\end{proof}

\cref{prop:  necessary for global sufficient} highlights a previously
overlooked role that the interview phase plays in determining whether
or not the ultimate NRMP match is stable.  While interviews are
necessary for agents to gain information, we learn from \cref{prop:
  necessary for global sufficient} that interviews can also act as a
bottleneck.  Even with complete information, once any agent is capable
of participating in more than one interview, all agents must interview with essentially the entire market to be certain that the ultimate match is stable.

\subsection{Homogeneous \Arrngs}
\label{sec:homogeneous}
One potential intervention that has been suggested to deal with
interview hoarding is a cap on the number of interviews each doctor can
accept.\footnote{For instance, see
\cite{MorganWinkelEtAl:JSE2020}.}  Here
we consider  \textbf{homogeneous \arrngs}:  all 
doctors face the same cap and all hospitals face the same cap. Thus,
the intervention 
would be described by two numbers: an interview capacity $l\in \mathbb
N$ for hospitals and an interview capacity $k\in \mathbb N$ for doctors.
The pair $(l,k)$ corresponds to the \arrng $(\i,\k)$ where for each
$h\in H, \i_h = l$ and for each $d\in D, \k_d = k$. 

By \cref{prop:  necessary for global sufficient} a 
homogenous \arrng $(l,k)$  always yields a stable matching only if $l = k =
1$ or $l,k\geq \min\{|D|,|H|\}$. Nonetheless, $(l,k)$ \emph{may} yield
stable matchings for  specific profiles of preferences. One might ask
  whether, starting at a profile $P\in \mathcal P$ and  
  \arrng $(l,k)$ that  yields a stable matching at $P$, the comparative statics
  with respect to $l$ and $k$ are consistent. The following examples
  demonstrate that this is not so.
It may be that, depending on $P$, increasing $k$ 
renders a  previously stable matching unstable, or the
opposite. In other words, the effect of an increase 
to $k$ is specific to $P$ and $l$. 

\begin{example}\label{example:comparative statics}
Either incrementing or
 decrementing  $l$ or $k$ can  create or eliminate instability.
\end{example}

\noindent Suppose  $|D| = |H| = 3$ and consider  $P\in \mathcal
P$ such that for each $i=1,2, 3$,\footnote{This can be
  embedded into a larger problem instance.} 
\[
  \begin{array}{ccc}
    \begin{array}{c}
      P_{h_i}   \\
      \hline
      d_1\\
      d_2\\
      d_3\\
      h_i
    \end{array}&&
        \begin{array}{c}
      P_{d_i}   \\
      \hline
          h_1\\
          h_2\\
          h_3\\
          d_i
    \end{array}
  \end{array}
\]For $P$,  $(2,2)$ yields a stable matching: the interview \assignment is $\nu$
such that  $\nu(h_1) =
\nu(h_2) = \{d_1,d_2\}$ and $\nu(h_3) = \{d_3\}$. So, the
$(l,k)$-\assignment is $\mu$ such that for
each $i=1,2,3$, $\mu(h_i)= d_i$, which is the unique stable matching.

We now observe that  if we increment or decrement
either $l$ or $k$  by one, the matching is no longer stable. In
other words, none of $(1,2)$, $(3,2)$, $(2,1),$ or $(2,3)$ yields a
stable matching. We summarize the interview \assignment and the $(l,k)$-\assignment for each of these
below.
\[
\hspace{-0.25in}  \begin{array}{|c|c|c|}
    \hline (l,k)&\text{ interview \assignment }& \text{ $(l,k)$-\assignment}\\
    \hline (1,2) & \nu(h_1) = \nu(h_2) = \{d_1\}, \nu(h_3) = \{d_2\} &
                                                                       \mu(h_1)
                                                                       =
    d_1, \mu(h_3) = d_2, \mu(h_2) = h_2, \mu(d_3) = d_3\\
    \hline (3,2) & \nu(h_1) = \nu(h_2) = D, \nu(h_3) = \{\} &\mu(h_1)
                                                              = d_!,
                                                              \mu(h_2)=d_2,
    \mu(h_3) = h_3, \mu(d_3) = d_3\\
    \hline (2,1) & \nu(h_1) = \{d_1,d_2\}, \nu(h_2) = \{d_3\},
                   \nu(h_3) = \{\} & \mu(h_1) = d_1, \mu(h_2) = d_3,
                                     \mu(h_3) = h_3, \mu(d_2) = d_2\\
    \hline (2,3) & \nu(h_1) = \nu(h_2) = \nu(h_3) = \{d_1,d_2\} &
                                                                  \mu(h_1)
                                                                  =
                                                                  d_1,
    \mu(h_2)= d_2, \mu(h_3) = h_3, \mu(d_3) = d_3\\
    \hline
  \end{array}
\]All four of the $(l,k)$-matchings are unstable. \hfill $\circ$

\medskip
The mechanics of \cref{example:comparative statics} are robust
and it is not by accident that $(2,2)$ yields a stable outcome to
start with. The preferences in the example have a particularly salient configuration,
which we focus on here.
A profile
$P\in \mathcal P$ has \textbf{common 
  preferences} if all doctors rank the hospitals
in the same way, and all hospitals rank the doctors in the same
way. To further restrict the definition, we also require
   that each doctor finds each hospital acceptable and each
hospital  finds each doctor acceptable. That is, for each pair $d,
d'\in D$ and each pair  $h, h'\in H$,  $P_d|_H = P_{d'}|_H$, $P_h|_D =
P_{h'}|_D$,  $d \mathrel P_h h$, and 
$h\mathrel P_d d$.\footnote{Under common
preferences there is a unique stable matching.}

As we see from \cref{example:comparative statics}, a result like 
\cref{prop: necessary for global sufficient} does not hold if we
restrict ourselves to common preferences. Our next result is a
characterization of homogeneous \arrngs that yield stable matchings for common
preferences.\footnote{The characterization of \cref{prop: sufficient
    equal l k} does not hold for \arrngs that are not homogeneous. For
a counterexample, suppose  $|D| = 4$, $|H| = 3$,  there is
$d\in D$ such that $\k_d = 3$, for each $d'\in D\setminus \{d\},
\k_{d'} = 2$, there is $h\in H$ such that $\i_h = 4$,
and for each $h'\in H\setminus \{h\}, \i_{h'} = 2$. For any common
preferences, $(\i,\k)$ yields a stable matching.}

\begin{proposition}\label{prop: sufficient equal l k}
  Let $P\in \mathcal P$ be such that there are common
  preferences. A homogeneous  \arrng $(l,k)$ yields a stable matching at $P$ if and only
  if $l = k$ or $l,k\geq \min\{|D|,|H|\}$.
\end{proposition}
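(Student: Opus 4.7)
The plan is to reduce \sufficiency under common preferences to a structural condition on the interview \assignment $\nu$, and then analyze the hospital-proposing DA in the interview step to decide when that condition holds. Label the agents so that every hospital has preferences $d_1 \mathrel{P_h} d_2 \mathrel{P_h} \cdots$ and every doctor has $h_1 \mathrel{P_d} h_2 \mathrel{P_d} \cdots$; let $m = \min\{|D|, |H|\}$. Because common preferences are aligned, the unique stable matching under $P$ is $\mu^*$ with $\mu^*(d_i) = h_i$ for each $i \leq m$ and all remaining agents unmatched.

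The reduction I would prove is: $(l, k)$ is \sufficient if and only if $(d_i, h_i) \in \nu$ for every $i \leq m$. For the ``if'' direction, $\mu^*$ restricted to $\nu$ is well-defined, and any candidate blocking pair $(d_{i'}, h_j)$ under the restricted preferences would require both $j < i'$ (from $h_j \mathrel{P_{d_{i'}}} h_{i'}$) and $i' < j$ (from $d_{i'} \mathrel{P_{h_j}} d_j$), a contradiction; the same index argument, applied at the smallest deviation from $\mu^*$, shows that $\mu^*$ is the unique stable matching under the restricted preferences, so doctor-proposing DA in step 2 returns $\mu^*$, which is stable under $P$. For the ``only if'' direction, if $(d_{i^*}, h_{i^*}) \notin \nu$ for some $i^* \leq m$, the restricted profile admits no matching that pairs $d_{i^*}$ with $h_{i^*}$, so DA in step 2 cannot produce $\mu^*$, and since $\mu^*$ is the unique stable matching under $P$, the output must have a blocking pair.

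It remains to characterize when $(d_i, h_i) \in \nu$ for every $i \leq m$. If $l, k \geq m$, every doctor with index at most $m$ has capacity to retain every proposal she receives in round 1 of the hospital-proposing DA, so the relevant pairs are all included. If $l = k$, I would induct on the round $s$ to show a block structure: after round $s$, hospitals $h_{(s-1)l + 1}, \ldots, h_{\min(sl, |H|)}$ and doctors $d_{(s-1)l + 1}, \ldots, d_{\min(sl, |D|)}$ form a complete bipartite block in $\nu$, because these doctors (all with capacity $k = l$) accept every proposal from exactly those hospitals and no others. Every $(d_i, h_i)$ with $i \leq m$ lies within its block and so is in $\nu$. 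For necessity, suppose $l \neq k$ and $\min(l, k) < m$. Take $l < k$, so $l < m$; the $l > k$ case is symmetric. In round 1 every hospital proposes to $d_1, \ldots, d_l$, and each such doctor retains her top $k$ proposals; since $l + 1 \leq k$, hospital $h_{l+1}$ is retained by all of $d_1,\ldots,d_l$ and thereby filled, so $h_{l+1}$ never proposes to $d_{l+1}$ and $(d_{l+1}, h_{l+1}) \notin \nu$. In the $l > k$ case with $k < m$, instead each of $d_1, \ldots, d_l$ has capacity filled by $h_1, \ldots, h_k$ in round 1, so $d_{k+1}$ rejects $h_{k+1}$ permanently and $(d_{k+1}, h_{k+1}) \notin \nu$.

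The main obstacle is the block-structure induction for $l = k$ when $l$ does not divide $m$ or $|D| \neq |H|$: the last block may be partial on either side, some hospitals may run out of doctors before filling capacity, and some doctors may end up with fewer than $k$ interviews. Tracking how many hospitals remain active at the start of each round, together with the index of the next unproposed doctor, should suffice to maintain the invariant. Combining the two steps then gives the characterization.
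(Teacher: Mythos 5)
Your proof is correct and follows essentially the same route as the paper's: both arguments come down to computing the hospital-proposing interview match explicitly under common preferences (the complete bipartite block structure $\{d_{(s-1)l+1},\dots\}\times\{h_{(s-1)l+1},\dots\}$ when $l=k$, and the truncation $\nu(h_t)=\{d_1,\dots,d_l\}$ or $\nu(d_t)=\{h_1,\dots,h_k\}$ when $l\neq k$), with your reduction ``$(l,k)$ is \sufficient iff $(d_i,h_i)\in\nu$ for all $i\leq m$'' simply making explicit a step the paper uses implicitly (and your index argument for it is sound). The edge cases you flag as the main obstacle are not a real gap: in each round $s$ every hospital of index above $(s-1)l$ proposes to the block-$s$ doctors, who keep the $\min(l,\cdot)$ best of them, so partial final blocks on either side still contain their diagonal pairs $(d_i,h_i)$ for $i\leq m$ --- the paper itself does no more than assert the corresponding formula --- and the only other nit is that in the $l,k\geq m$ case with $|D|<|H|$ and $k<|H|$ a doctor retains the top $k\supseteq\{h_1,\dots,h_m\}$ rather than literally every proposal, which does not affect the conclusion.
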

\begin{proof}
Let $\{d_t\}_{t=1}^{|D|}$ and
  $\{h_t\}_{t=1}^{|H|}$ be enumerations of $D$ and $H$, respectively,
  such that every hospital prefers $d_t$ to $d_{t+1}$ and every doctor 
  prefers $h_t$ to $h_{t+1}$. Let $m = \min\{|D|, |H|\}$.
  There is a unique stable matching $\mu^*$, such that for each
  $t=1, \dots, m$, $\mu^*(h_t) = d_t$.

  Let $\nu$ be the interview \assignment under $(l,k)$ and $\mu$ be the $(l,k)$-\assignment.

First, we show that $(l,k)$ yields a stable matching at $P$ only if $l =
k$ or  $l,k\geq \min\{|D|,|H|\}$. Suppose $l\neq k$. If  $l < k$ and
$l < \min\{|D|,|H|\}$, then for each $t = 1,\dots, k$, $\nu(h_t) =
\{d_1, \dots, d_l\}$. In particular, $d_k \notin \nu(h_k)$ so
$\mu(h_k) \neq d_k$. On the other hand, if 
$l > k$ and
$k < \min\{|D|,|H|\}$, then for each $t = 1,\dots, l$, $\nu(d_t) =
\{h_1, \dots, h_k\}$. In particular, $h_l \notin \nu(d_l)$ so
$\mu(d_l) \neq h_l$. In either case, the $(l,k)$-\assignment is
not stable.

Now, we show that if $l = k \leq m$, then $(l,k)$  yields a
stable matching.
For each $t = 1, \dots, m$, let $\underline t = \lfloor
\frac{t-1}{l}\rfloor$. Then, for each $t = 1,\dots,
i$, $\nu(h_t) = \{d_{\underline t +1},\dots, d_{\underline t
  +l}\}$ and $\nu(d_t) =  \{d_{\underline t +1},\dots, d_{\underline t
  +l}\}$. Thus, for each $t=1, \dots, m$, $\mu(h_t) = d_t$ and so
$\mu$ is stable.

Finally, if $l,k \geq m$, then for each $t = 1,\dots,m$, $\nu(d_t)
\supseteq\{h_1,\dots, h_m\}$. Since \mbox{$h_t\in \nu(d_t)$,} $h_t =
\mu(d_t)$ and so $\mu$ is stable.
\end{proof}

If the hospitals' interview capacity is fixed at some specific $l$,
where to set the doctors' interview
cap, $k$, is an important policy decision. \cref{prop: sufficient equal l k} says that the optimal
value for $k$ is exactly at $l$ whether  the objective is to minimize the
number of blocking pairs or to maximize the match rate (the proportion
of positions that are filled). Our next result sheds light on this
objective.

\begin{proposition}\label{prop: target interview cap}
Fix the hospitals' interview capacity at $l$ and consider $k$ and $k'$
such that 
either $k' < k \leq l$ or  $l \leq k < k'$. Suppose $P\in
\mathcal P$ has common preferences. The $(l,k')$-\assignment has more
blocking pairs and a weakly lower match rate than the $(l,k)$-\assignment.
\end{proposition}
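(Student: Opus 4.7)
The strategy is to obtain an explicit description of both the interview \assignment and the final \assignment under common preferences, and to compare $(l,k)$ with $(l,k')$ by a direct block-by-block tally.

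Using the doctor and hospital enumerations from the proof of \cref{prop: sufficient equal l k}, for any homogeneous \arrng $(l,\tilde{k})$ I define doctor blocks $D_j = \{d_{(j-1)l+1},\ldots,d_{\min(jl,|D|)}\}$ and hospital blocks $H_j(\tilde{k}) = \{h_{(j-1)\tilde{k}+1},\ldots,h_{\min(j\tilde{k},|H|)}\}$. A straightforward induction on the rounds of hospital-proposing DA establishes that the interview \assignment $\nu_{l,\tilde{k}}$ pairs each doctor of $D_j$ with exactly the hospitals of $H_j(\tilde{k})$. The inductive step rests on common preferences: at round $j$, the remaining hospitals $\bigcup_{i\geq j} H_i(\tilde{k})$ all propose to their top $l$ surviving doctors, namely those of $D_j$, and each doctor of $D_j$ retains only her top $\tilde{k}$ proposals, which are exactly the hospitals of $H_j(\tilde{k})$. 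Doctor-proposing DA applied to the restricted preferences then produces $\mu_{l,\tilde{k}}$ which, within block $j$, matches $d_{(j-1)l+i}$ with $h_{(j-1)\tilde{k}+i}$ for $i=1,\ldots,s_j$ where $s_j = \min(|D_j|,|H_j(\tilde{k})|)$, leaving the remaining doctors and hospitals of the block unmatched.

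With this description in hand, I compare $(l,k)$ with $(l,k')$. In the case $k'<k\leq l$, the doctor-block partition is unchanged while each complete hospital block shrinks from size $k$ to $k'$; the per-block match count $s_j$ drops from $k$ to $k'$ in every fully populated block, so $\sum_j s_j$ does not increase, yielding a weakly lower match rate. The case $l\leq k<k'$ is handled symmetrically with the roles of the two sides swapped. For blocking pairs, common preferences give the equivalence that $(d_i,h_j)$ blocks $\mu_{l,\tilde{k}}$ exactly when the index of $\mu(h_j)$ strictly exceeds $i$ (treating unmatched agents as having index $+\infty$) and the index of $\mu(d_i)$ strictly exceeds $j$. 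Under the block structure, the matched-matched contribution vanishes (the implied inequalities on block positions contradict each other), so blocking pairs split cleanly into pairs of unmatched agents and unmatched--matched pairs; both contributions admit closed-form expressions in $l, \tilde{k}, |D|, |H|$ that I can check are strictly increasing as $|\tilde{k}-l|$ grows.

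The main obstacle is the accounting around the trailing partial blocks when $|D|$ or $|H|$ is not a multiple of $l$ or $\tilde{k}$. I expect the cleanest route is to first dispatch the case in which all blocks are complete---where the monotonicity reduces to arithmetic---and then verify that the residue from a partial block cannot reverse the comparison, since shrinking hospital blocks strictly creates new unmatched low-indexed hospitals (case 1), while enlarging them strictly pushes matched-doctor indices further down (case 2).
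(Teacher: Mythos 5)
Your proposal follows essentially the same route as the paper's proof: both derive the explicit block structure of the interview \assignment under common preferences (consecutive groups of $l$ doctors interviewing consecutive groups of $\tilde{k}$ hospitals), read off the positional final match within each block, and then count matched agents and blocking pairs in closed form as functions of $l$, $\tilde{k}$, $|D|$, and $|H|$. Your observations that matched--matched pairs cannot block and that the comparison reduces to monotonicity of these counts in $\tilde{k}$ (with the trailing partial block handled separately) correspond exactly to the computations the paper carries out.
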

\begin{proof}
  Let $P\in \mathcal P$ be such that there are common
  preferences. Let $\{d_t\}_{t=1}^{|D|}$ and
  $\{h_t\}_{t=1}^{|H|}$ be enumerations of $D$ and $H$, respectively,
  such that every hospital prefers $d_t$ to $d_{t+1}$ and every doctor 
  prefers $h_t$ to $h_{t+1}$.

  Let $m = \min\left\{\left\lfloor \frac{|H|}{k}\right\rfloor,
    \left\lfloor \frac{|D|}{l}\right\rfloor\right\}$. The interview
  matching is such that for each $d_t$, if  $t \leq ml$,
  \[
    \nu(d_t) = \{h_{(n-1)k+1},\dots, h_{nk}\}\text{ where $n$ is such
      that }(1-n)l < t \leq nl
  \]
  if  $ml < t \leq (m+1)l$,
  \[
    \nu(d_t) =\left\{
      \begin{array}{ll}
        \{h_{mk+1},\dots, h_{n}\}&\text{if }|H| \geq mk+1\\
        \emptyset & \text{otherwise}
      \end{array}\right.
      \text{ where } n=\min\{|H|, (m+1)k\}
  \]
  and if $(m+1)n < t$, $\nu(d_t) = \emptyset$.

  We first consider the case where  when $k> l$ and show that the
  number of matched hospitals is decreasing in $k$ and that the
  number of blocking pairs is increasing in $k$.
  
  Given $P$ and its restriction to $\nu$,  the
  $(l,k)$-\assignment, $\mu$, at $P$ is such that for each $d_t$, if  $t \leq ml$,
  \[
    \mu(d_t) = h_{(n-1)k+(t\bmod l)},  \text{ where $n$ is such
      that }(n-1)l < t \leq nl,
  \]
  if  $ml < t \leq (m+1)l$,
  \[
    \mu(d_t) =\left\{
      \begin{array}{ll}
        h_{mk+(t\bmod l)}& \text{if }|H| \geq mk+(t\bmod l)\\
        d_t & \text{otherwise},
      \end{array}\right.
  \]
  and if $(m+1)l < t$, $\mu(d_t) = d_t$.

  Let $n = \min\{|H| - mk, |D|-ml\}$. Given the $(l,k)$-\assignment
  above, the set of matched hospitals is 
  \[
    \{h_{ik+s}: i=0,\dots,m-1, s=1\dots, l\}\cup\{h_t: t = mk+1,\dots,mk+n\}.
  \]
  Therefore, the number of matched hospitals is $ml
  +n$. Holding $l$ fixed, this is decreasing in $k$.

  The $(l,k)$-\assignment is blocked by all pairs  consisting
  of an unmatched hospital and any doctor with a higher index.
  That is, $(h_t, d_{t'})$ such that
  $t \leq mk$,  $t-1\bmod k \geq l$  and $t' > t$. These are
  the only pairs that block it. 
  Thus, the number of blocking pairs is
  \[
    \sum_{n=0}^{m-1}\sum_{i=l+1}^k |D| - (nk+i).
  \]Holding $l$ fixed, this  is increasing in $k$.

  Now, we consider the case where $k < l$ and show that the number of
  matched hospitals is increasing in $k$ and the number of blocking
  pairs is decreasing in $k$.

  Given $P$ and its restriction to $\nu$,  the
  $(l,k)$-\assignment at $P$ is such that for each $h_t$, if  $t \leq mk$,
  \[
    \mu(h_t) = d_{(n-1)l+(t\bmod k)},  \text{ where $n$ is such
      that }(n-1)l < t \leq nl,
  \]
  if  $mk < t \leq (m+1)k$,
  \[
    \mu(h_t) =\left\{
      \begin{array}{ll}
        h_{ml+(t\bmod k)}& \text{if }|D| \geq ml+(t\bmod k)\\
        h_t & \text{otherwise}
      \end{array}\right.
  \]
  and if $(m+1)k < t$, $\mu(h_t) = h_t$.

    Let $n = \min\{|H| - mk, |D|-ml\}$.  Given $(l,k)$-\assignment above, the set of
    matched hospitals is $\{h_t:t\leq mk+n$. Therefore, the number of
    matched hospitals is $mk+n$. Since $k < l$, this is weakly
    increasing in $k$.
    
  The $(l,k)$-\assignment is blocked by all pairs  consisting
  of an unmatched doctor and any hospital with a higher index.
  That is, $(d_t, h_{t'})$ such that
  $t \leq ml$,  $t-1\bmod l \geq k$  and $t' > t$. These are
  the only pairs the block it. 
  Thus, the number of blocking pairs is
  \[
    \sum_{n=0}^{m-1}\sum_{i=k+1}^l |D| - (nl+i).
  \]Holding $l$ fixed, this  is decreasing in $k$.
  \end{proof}

\section{Simulations}
\label{sec:simulations}
Our analytical results are of two sorts. On one hand, \cref{thm: more
  not better} applies without restrictions on 
preferences. However, it only helps identify the problem caused by
increases to doctors' interview capacities without suggesting a remedy. On the other hand, when we focus on common
preferences, \cref{prop: sufficient equal l k,prop: target interview cap} deliver a clearcut
policy prescription. In this section, we use simulations to bridge the
gap. This allows us to consider how changes in the
doctors' interview capacities affect hospitals' welfare, match rates,
 stability, and so on, in a more general setting.

While there is evidence that
preferences do indeed have a common component
\citep{Agarwal:AER2015,Rees-Jones:GEB2018}, agents care about ``fit''
as well. Moreover, an idiosyncratic component is to be expected. 
We adopt the random utility model of
\citet{AshlagiKanoriaLeshno:JPE2017}.\footnote{This, in turn, 
  is adapted from \citet{HitschHortacsuAriely:AER2010}.}
Each hospital $h\in H$  has a common component to its quality, $x_h^C$, and a
``fit'' component, $x^F_h$. Similarly, each doctor
$d\in D$ has a common component to her quality, $x_d^C$ and a fit
component, $x^F_d$. The utilities that $h$ and $d$ enjoy from being
matched to one another are
\[
  \begin{array}{c}
  u_h(d) = \beta x_d^C - \gamma\left(x_h^F -x_d^F\right)^2 +
    \varepsilon_{hd}\\
\text{ and }\\
  u_d(h) = \beta x_h^C - \gamma\left(x_h^F -x_d^F\right)^2 +
    \varepsilon_{dh},
  \end{array}
  \]
  respectively, where $\varepsilon_{hd}$ and $\varepsilon_{dh}$  are
  drawn independently from the standard  logistic distribution.  Each
  $x_h^C, x_h^F, x_d^C,$ and $x_d^F$ is drawn independently from the
  uniform distribution over $[0,1]$.
  The coefficients $\beta$ and $\gamma$ weight the common and fit
  components, respectively. When $\beta$ and $\gamma$ are both zero,
  preferences are drawn uniformly at random. As $\beta\to \infty$,
  these approach  common preferences. As $\gamma$
  increases, preferences become more  ``aligned'': the fit, which is
  orthogonal to the common component, becomes more important.

  
  \newcommand{\simhospcap}{25}
  \newcommand{\simbeta}{40}
  \newcommand{\simequivalpha}{0.864}
  \newcommand{\simgamma}{20}
  \newcommand{\simtrials}{100}
  \newcommand{\simnumdocs}{470}
  \newcommand{\simnumhosps}{400}
  \newcommand{\simlowk}{2}
  \newcommand{\simhik}{100}

  \newcommand{\toptier}{200}
  \newcommand{\opttierk}{13} 
  
  \newcommand{\simkeqlblockingpairs}{136.08}
  \newcommand{\simkeqlmatchrate}{99.9525}
  

  Our simulated market has $\simnumhosps$
  hospitals.\footnote{The NRMP match is broken down into smaller
    matches by specialty. In 2020, among 50 specialties for PGY-1
    programs, the largest had 8,697 positions, the 10th largest had 849
    positions, the 25th largest had 38 positions, the 49th largest had
    one position, and the smallest had
    no positions. This data is available from the
    \href{https://mk0nrmp3oyqui6wqfm.kinstacdn.com/wp-content/uploads/2020/06/MM_Results_and-Data_2020-1.pdf}{NRMP}. Our
    chosen number of hospitals is comparable to the 70th percentile
    among specialties (that is, 70\% of specialities are smaller than this). 
    }
We have set the number of doctors at $\simnumdocs$.\footnote{There  
 were, on average, 0.85 PGY-1 positions per applicant in the 2020. Our 
 chosen number of doctors reflects this ratio. }
The parameters for the random utility model are  $\beta = \simbeta$ and $\gamma = 
  \simgamma$. 
Since our interest is in the effects of
changes to doctors' interview capacities, we fix  hospital interview
capacities at $l = \simhospcap$.  We discuss the robustness of our
findings with regards to our choices of model and
parameter values in \cref{apx:sim}.

Our first simulation results involve varying $k$ from $\simlowk$ to
$\simhik$.\footnote{We have chosen this upper bound to be high enough
  that further increases have little effect. Thus, we interpret this as
  doctors being essentially unconstrained in how many interviews they
  can accept.}   \cref{fig: k v match rate} shows that the match rate
increases and then 
decreases. On the other hand, \cref{fig: k v blocking pairs} shows
that the number of blocking pairs decreases and then increases.
These results are consistent with \cref{prop: target interview cap}.
Despite preferences not being common, both the match rate and
stability (measured by the number of blocking pairs) are optimized at $k=l$.

\begin{figure}
  \centering
  \begin{subfigure}[b]{0.49\textwidth}
    \centering
    \ifimages
    \includegraphics[scale=0.45]{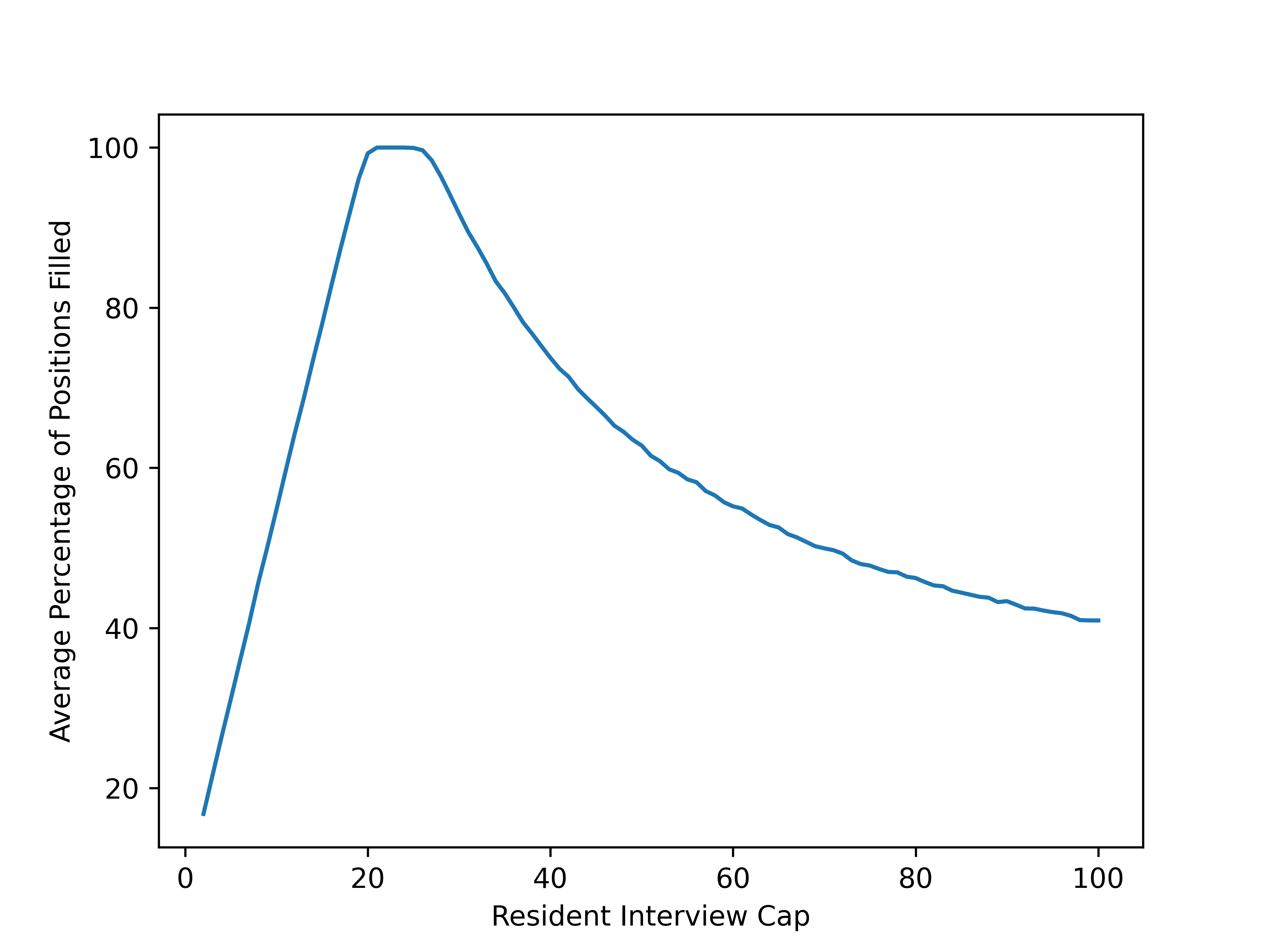}
    \fi 
    \caption{The average
      match rate is    $\simkeqlmatchrate$ when
      \mbox{$k=l$ }
.}
    \label{fig: k v match rate}
  \end{subfigure}
  \hfill
  \begin{subfigure}[b]{0.49\textwidth}
    \centering
    \ifimages
    \includegraphics[scale=0.45]{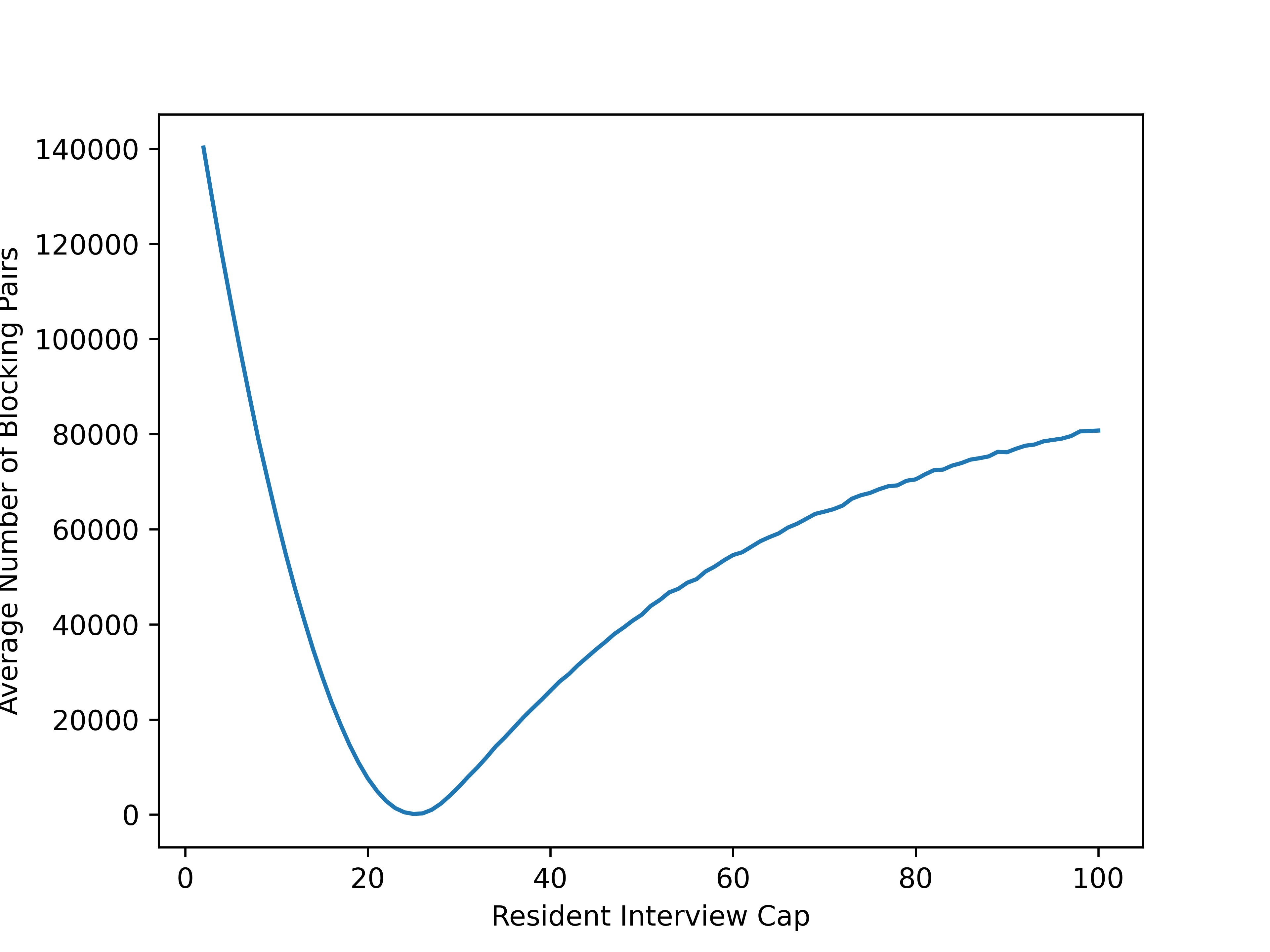}
    \fi
    \caption{The average number of
      blocking pairs is $\simkeqlblockingpairs$ when $k=l$.}\label{fig: k v blocking pairs}
  \end{subfigure}
  \caption{We vary $k$ from $\simlowk$ to $\simhik$ with $l$ fixed at $\simhospcap$.}
\end{figure}

Our next set of results evaluate a hypothetical policy of restricting
doctors to a maximum of $k = l(=\simhospcap)$ interviews. We compare this policy with the benchmark of no
intervention where doctors are
completely unconstrained. 

\cref{fig: residents cap preference} shows the distribution of the
number of doctors who prefer their match under the 
    optimal $k$ over the benchmark as well as the distribution of
    those  with the opposite preference. We see that the former
    is considerably higher than the latter.  \cref{thm:
      more not better} allows for certain unmatched doctors to gain
    from relaxing this policy. However,
    \cref{fig: residents cap preference}
    shows that such doctors are rare: on average, only 0.17 doctors gain while 334.45 are harmed.
\cref{fig: hospitals cap preference} shows the same  distributions,
except for hospitals.  Despite the fact that
    \cref{thm: more not better} does not address hospitals' welfare,
    our simulations show that more hospitals prefer the optimal cap of
    $k=l$
    than leaving the 
        doctors unconstrained.
       The policy also has the benefit bringing
stability to the final matching. \cref{fig: blocking
  diff} shows the distribution of the  increase in the number of
blocking pairs 
    when
    we go from the policy of $k=l$ to doctors being unconstrained. Finally, we compare the distribution of
    interviews among the doctors between the two \arrngs in \cref{fig:
    interview dist}. The constraint limiting doctors to  $k = l$
  interviews binds for many doctors.  One implication  is that significantly more
    doctors receive zero interviews when they are unconstrained.
    This is consistent with the intuition that if interviews were
    costless for doctors, then highly sought after doctors would hoard
    interviews and others would be left with nothing.

\begin{figure}
  \centering
  \begin{subfigure}[b]{0.49\textwidth}
    \centering
    \ifimages
    \includegraphics[scale=0.45]{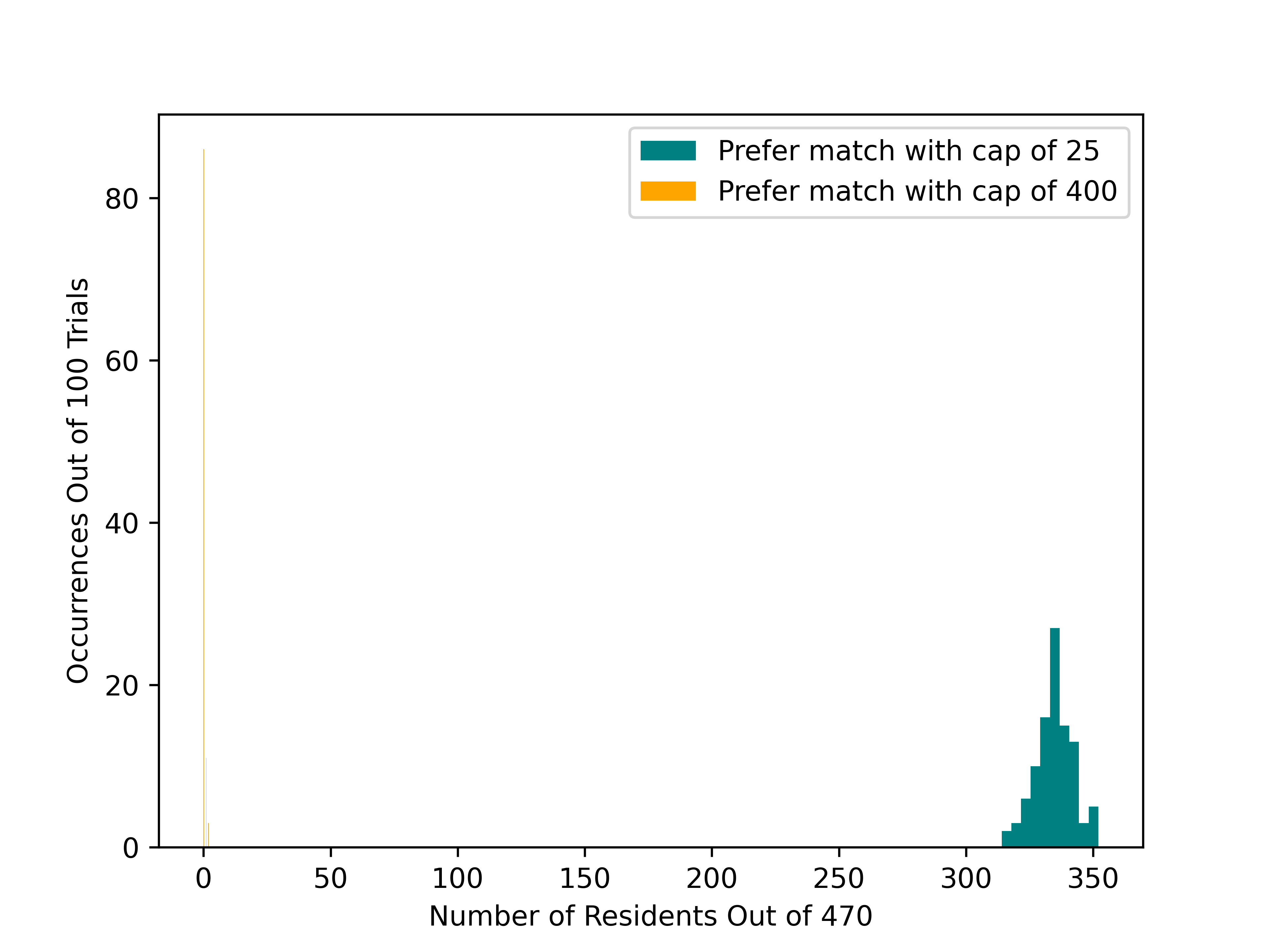}
    \fi
    \caption{Distribution of the number of doctors who prefer their
      match at $k = l $ over being unconstrained and vice versa.}
    \label{fig: residents cap preference}
  \end{subfigure}
  \hfill
  \begin{subfigure}[b]{0.49\textwidth}
    \centering
    \ifimages
    \includegraphics[scale=0.45]{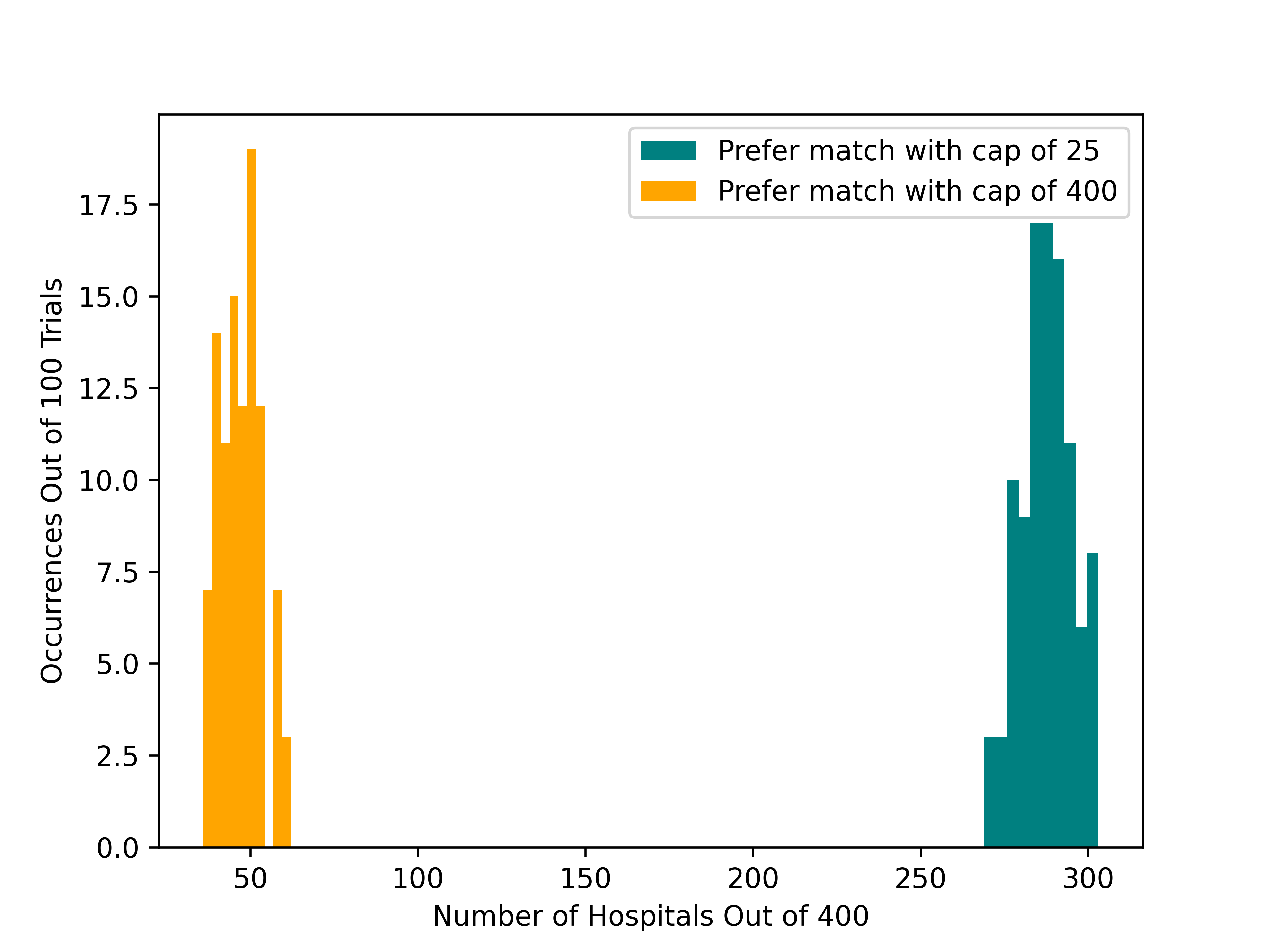}
    \fi
    \caption{Distribution of the number of hospitals that prefer their
      match at $k = l$ over the doctors being unconstrained and vice versa.}
    \label{fig: hospitals cap preference}
  \end{subfigure}
  \hfill
  \begin{subfigure}[b]{0.49\textwidth}
    \centering
    \ifimages
    \includegraphics[scale=0.45]{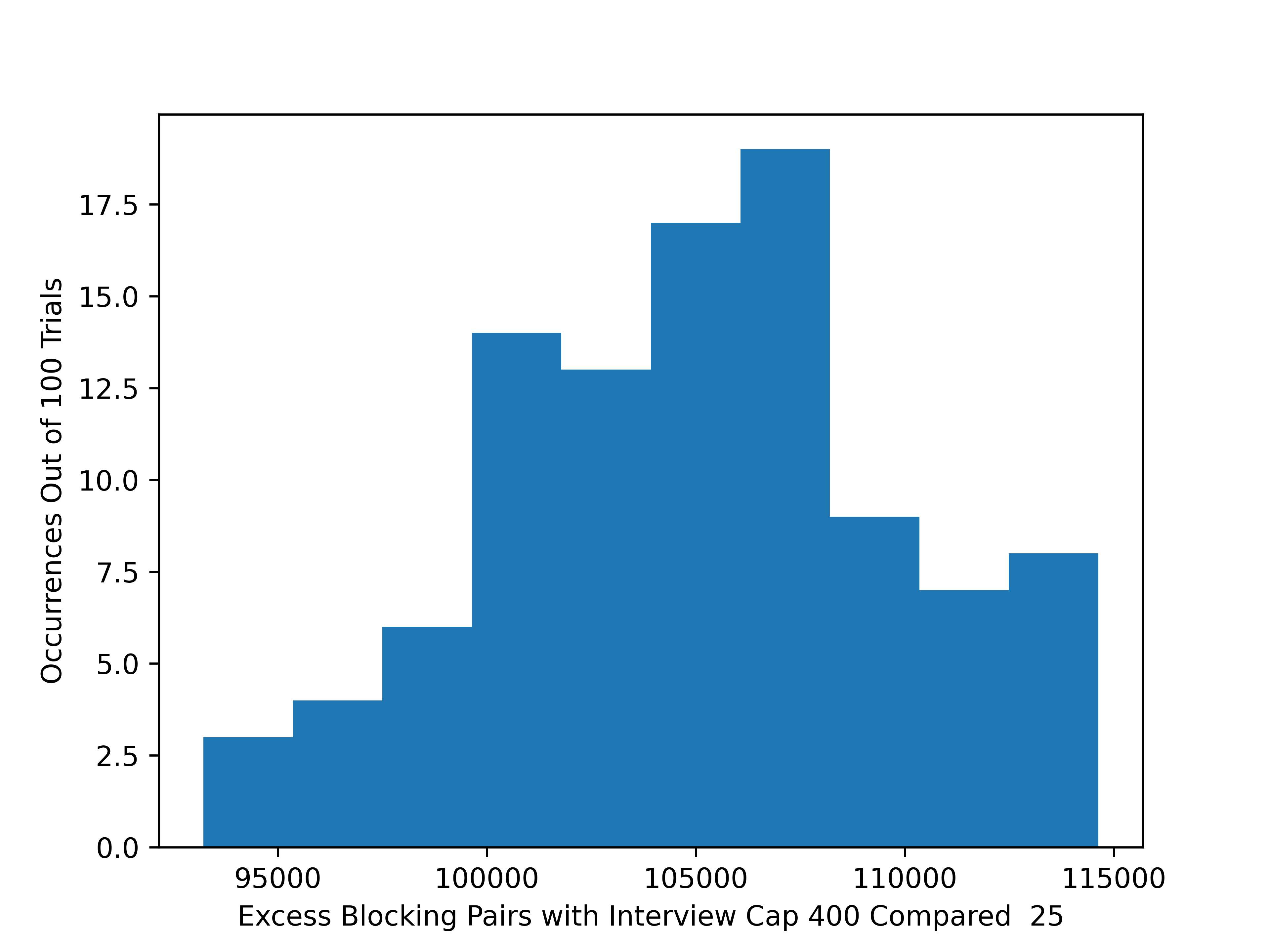}
    \fi
    \caption{Distribution of the number of excess blocking pairs when
      doctors are unconstrained over the number of such pairs at $k=l $.}
    \label{fig: blocking diff}
  \end{subfigure}
  \hfill
  \begin{subfigure}[b]{0.49\textwidth}
    \centering
    \ifimages
    \includegraphics[scale=0.45]{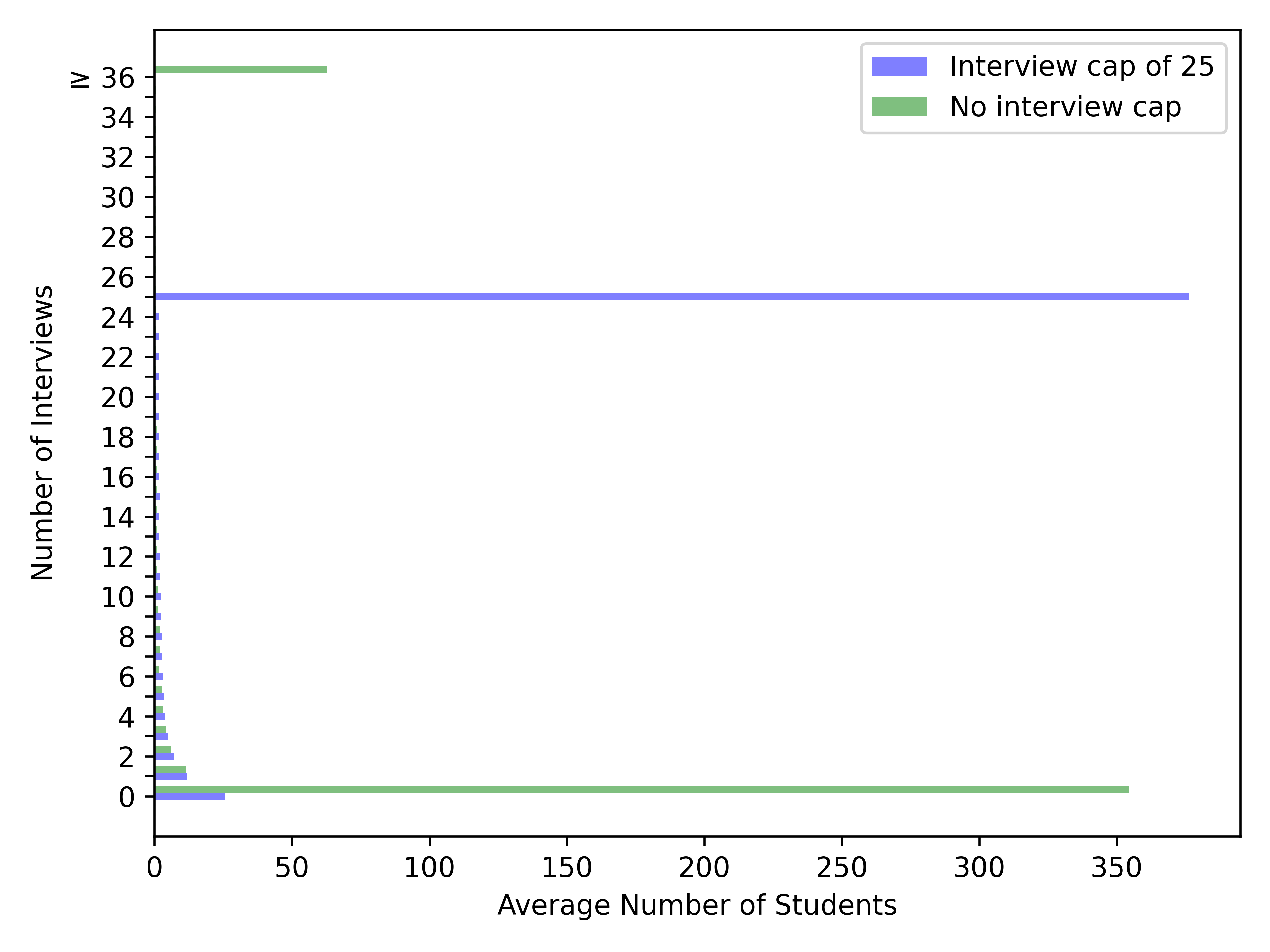}
    \fi
    \caption{Distributions of interviews at $k=l $ and without a
      cap. The uncapped distribution vanishes with the
    number of interviews reaching the hundreds.}
    \label{fig: interview dist}
  \end{subfigure}
  \caption{Comparisons of the intervention of capping doctors'
    interview capacities at $k =l $ to leaving doctors
    unconstrained. }
\end{figure}

In our last simulations , we consider the possibility that the NRMP
could set not only 
a cap on interviews that doctors can accept, but can  also control the number of
interviews that hospitals offer. From  \cref{prop: sufficient equal l
  k}, we know that if preferences are common, the match rate would be
maximized where $k=l$. \cref{fig: match rate l k} shows that, even
when preferences are 
not exactly common, this is still the optimal policy: along the
diagonal, where $k$ and $l$ are equal, the match rate is close to
100\% and  the match is almost stable.

\begin{figure}
  \centering
      \ifimages
      \includegraphics[scale=1]{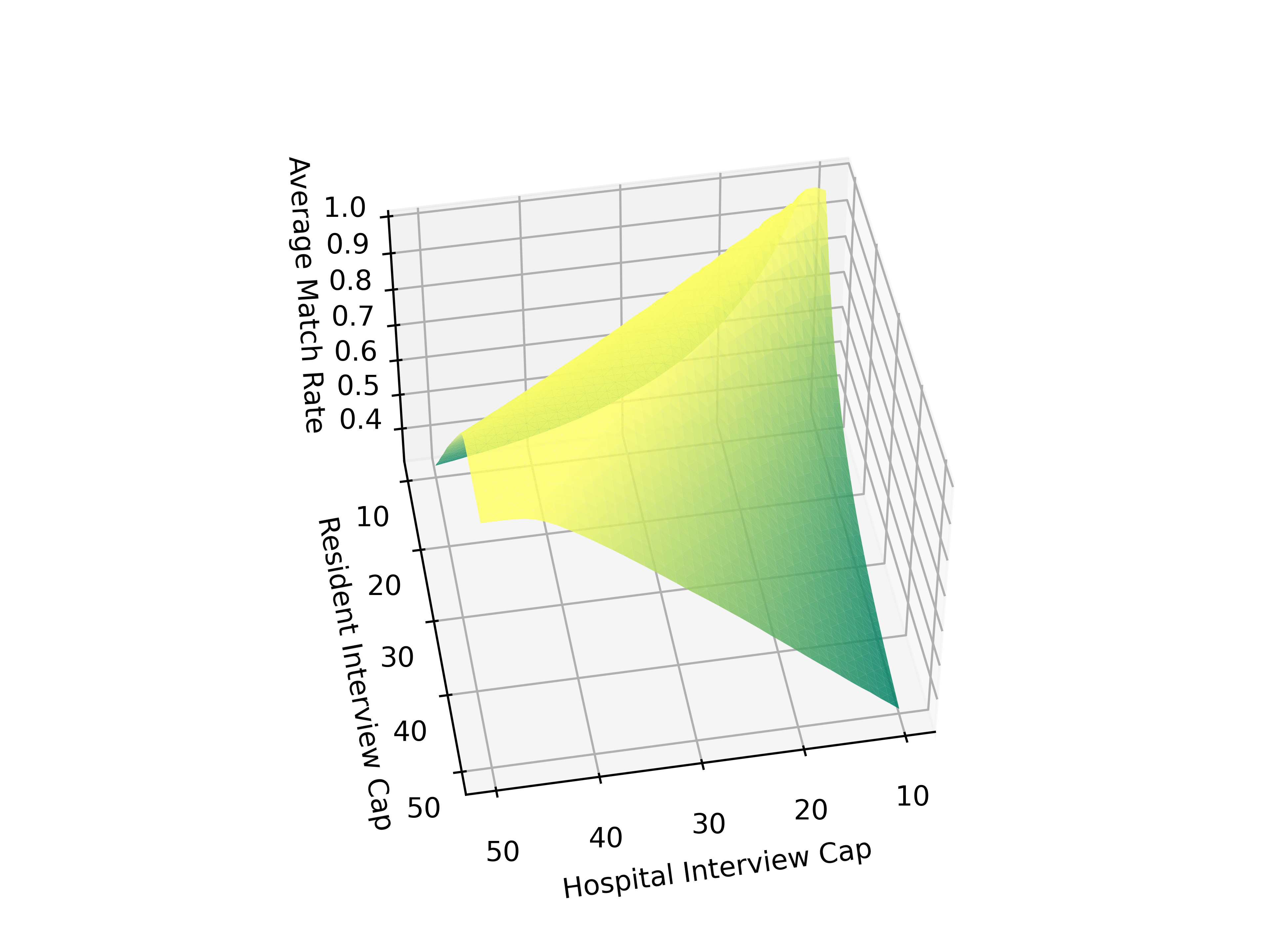}
      \fi
      \caption{Match rate as a function of $l$ and $k$}\label{fig: match
    rate l k}
\end{figure}

In \cref{sec:2021-match}, we modify our model to consist of two tiers of doctors
and contrast simulations under na\"ive choices by hospitals (as above)
to those under a heuristic for hospitals where every choice involves a
``safety'' doctor from the lower tier. The only qualitative difference
is in the match rate: rather than being unmatched, hospitals match
with safety candidates. The welfare based comparisons are very
similar.

\section{Conclusion}

The COVID-19 pandemic has had a significant impact on the way
interviews are conducted.  It has also impacted the
distribution of interviews among doctors.  Through our
theoretical results and simulations, we argue that the 2021 NRMP Match
was likely  inferior to previous years.

In future years, the NRMP should consider policies to mitigate these
effects. Our analysis supports the idea of interview caps
and   our simulations provide evidence that such a policy would
reduce the bottleneck created by the interview phase.  Such caps
can be implemented with very limited centralization, for instance,
using a ticket system.

Even if such interventions are not possible in the very short run,
our policy prescription is that residency programs should be advised
to increase the number of candidates they interview  relative to
previous years.

Design of a fully centralized clearinghouse for interviews is an area that remains
open. As earlier work on the 
interview pre-markets have shown, strategic analysis is only
tractable under very stringent assumptions
\citep{Kadam:2015,LeeSchwarz:RAND2017,Beyhaghi:2019}. Nonetheless,
the current paper adds to the evidence (along with
\citet{EcheniqueGonzalezWilsonYariv:2020}) that a 
more holistic approach that includes the interview stage is critical.

This interview-driven bottleneck is likely a factor in other matching
contexts as well, including fully decentralized labor markets.  For
example, we expect it will affect  the junior 
market for economists.  This labor market  typically consists of
short interviews followed by on-campus visits.  Physical constraints
typically limit both the number of short interviews and on-campus
visits a candidate is able to accept.  With virtual interviews and
virtual ``fly-outs,'' we expect candidates to accept more of both than
they would otherwise.  As a result, we could expect the same
bottleneck in the economics job market as in the NRMP match.

\bibliography{NRMPInterviews.bib}
\appendixpage
\appendix
\section{Role of Interviews:  Preference Formation and Coordination}
\label{sec:role-interviews-play}

The interviews that precede the NRMP match serve at least two important
purposes. The more obvious one is preference formation:
while agents have a prior sense of their preferences, they update their preferences based on information revealed by the interviews. Less obvious is the coordination of whom to
rank. A pair of agents can only be matched if they
\emph{rank one another}. In a market with hundreds, if not thousands,
of potential partners, even formulating---much less submitting---a
ranking of all possible partners is impractical. Thus agents submit
only a portion of their preferences. Interviews function as a device to
coordinate which portions they submit.

The second role of interviews that we have mentioned above  is
important even in a world where interviews are completely
uninformative.

A natural question is to ask how the analysis of this paper is
affected when we take preference formation into account.  We first
observe that we have drawn conclusions about doctors' welfare (\cref{thm:
more not better}) and stability (\cref{prop: sufficient equal l
  k} and \cref{prop: target interview cap}). These comparisons are
based on exogenous preferences. If preference formation
is part of the model, then preferences are endogenous and there is no
basis for such comparisons. However, we explain below how even here
 the match rate, which is an objective measure, is maximized when
there is balance between the two 
sides' interview capacities.

Represent the preferences of the agents prior to the interviews by the
\textbf{pre-interview
  preferences, \boldmath $\overline P$}$\in \mathcal P$. As with 
\cref{prop: sufficient equal l k,prop: target interview cap}, we assume that $\overline
P$ has common preferences.

Given an interview matching $\nu$, for each agent $i\in D\cup H$, let
{\boldmath $P_i^\nu$}  be $i$'s \textbf{post-interview
  preferences}. These are the preferences
that $i$ forms through the interview process. The only requirement  for $P^\nu_i$
is that $\nu(i)$ is the set of partners that $i$ ranks as 
acceptable. We do not make any other assumptions
about how $P^\nu$ relates to $\overline P$. In fact, we leave the exact
orderings unspecified as this is unnecessary to draw
conclusions about the match rate.

Since the pre-interview preferences are common, there is a unique
interview matching, $\nu$, that is stable with respect to these preferences. 
Given $\nu$,  unlike the two-phase process in \cref{sec: market}, the
next step is to use the \emph{post-interview preferences},  $P^\nu$
(as opposed to $(\overline P_i|_{\nu(i)})_{i\in D\cup H}$)  as the 
input to the doctor-proposing DA algorithm. The output is what we call
the \textbf{\boldmath $(l,k)$-matching with updating}.

If the interviews either lead to a
  matching ($l=k=1$) or do not impose any constraints ($l, k \geq
  \min\{|D|,|H|\}$), then a total of
  $\min\{|D|,|H|\}$ pairs form. In the proof of \cref{prop: max
    match rate updating} below, we demonstrate that when $l=k$, the
  $(l,k)$-matching with updating matches as many pairs as this ``first
  best'' benchmark. Increasing the gap between $l$ and $k$ causes the 
  $(l,k)$-matching with updating to fall short of this benchmark
  regardless of what the updated  preferences are. This is a
  straightforward application of the pigeon hole principle.

\begin{proposition}\label{prop: max match rate updating}
Fix the hospitals' interview capacity at $l$ and consider $k$ and $k'$
such that 
either $k' < k \leq l$ or  $l \leq k < k'$. The $(l,k')$-\assignment
with updating has a weakly lower match rate than the
$(l,k)$-\assignment with updating.
\end{proposition}
\begin{proof}
  Let $\{d_t\}_{t=1}^{|D|}$ and
  $\{h_t\}_{t=1}^{|H|}$ be enumerations of $D$ and $H$, respectively,
  such that every hospital prefers $d_t$ to $d_{t+1}$ and every doctor 
  prefers $h_t$ to $h_{t+1}$ under $\overline P$.

  Let $m = \min\left\{\left\lfloor \frac{|H|}{k}\right\rfloor,
    \left\lfloor \frac{|D|}{l}\right\rfloor\right\}$.
  For each $n = 1, \dots, m$, let
  \begin{align*}
    D_n &= \{d_t: (n-1)l < t \leq nl\}\\
    &\text{and}\\
    H_n &= \{h_t: (n-1)k < t \leq nk\}.
  \end{align*}
  By definition of $m$, either $|D| < (m+1)l$ or $|H| < (m+1)k$. Let
  $t_D =  \min\{|D|, (m+1)l\}$,  $t_H =  \min\{|H|, (m+1)k\}$, 
  \begin{align*}
    D_{m+1} &=\left\{d_t: ml < t \leq t_D \right\},\\
            &\text{and}\\
    H_{m+1} &=\left\{h_t: mk < t \leq t_H \right\}
  \end{align*}
  The unique pairwise stable interview 
matching, $\nu$, is such that for each $n = 1, \dots, m+1$, every
doctor in $D_n$ is interviewed by every hospital in $H_n$---that is, for
each $d\in D_n$, $\nu(d) = H_n$ and for each $h\in H_n, \nu(h) =
D_n$. Moreover, every doctor or hospital not in one of these sets has
no interviews---that is, if there is $d_t$ such that  $t> t_D$,
then $\nu(d_t) = \emptyset$ and if there is $h_t$ such that $t> t_H$,
then $\nu(h_t) = \emptyset$.

Now we consider the $(l,k)$-matching with updating, $\mu$. Having received no
interviews,  any  doctor or hospital with index 
higher than $t_D$ or $t_H$, respectively, is necessarily unmatched
regardless of the post-interview preferences.
Since  each agent's post-interview preferences only rank those agents in one's
interview matching as acceptable, for each $n = 1,\dots, m+1$, we have that
\begin{enumerate}
\item for each $d\in D_n$, $\mu(d) \in H_n\cup\{d\}$ and
\item for each $h\in H_n$, $\mu(h) \in D_n\cup\{h\}$.
\end{enumerate}

If $l < k$, then for each $n=1,\dots,m$,  $|D_n| = l < k = |H_n|$. So
exactly $k-l$ hospitals in $H_n$ are unmatched. There are $m(k-l)$
such hospitals. If there are fewer hospitals in $H_{m+1}$ than
doctors in $D_{m+1}$, then $|H_{m+1}| - |D_{m+1}|$ additional
hospitals are unmatched. That is, the number of unmatched hospitals
among those in $H_{m+1}$ is  $\max\{0,|H_{m+1}| - |D_{m+1}|\} =
\max\{0, (t_H - mk) - (t_D - m_l)\} = \max\{0, (t_H - t_D)-
m(k-l)\}$. So the total number of unmatched hospitals is
\[
  m(k-l) +\max\{0, (t_H - t_D)-m(k-l)\},
\] which is weakly increasing in $k$.

If $k < l$, then every hospital $h_t$ such that $t\leq mk$ is matched
by $\mu$.  As when $ l < k$, $\max\{0, (t_H - t_D)-
m(k-l)\}$ hospitals among those in $H_{m+1}$ are unmatched by
$\mu$. Every hospital $h_t$ such that $t > t_H$ is also
unmatched. Thus, the number of hospitals that $\mu$ leaves unmatched is $\max\{0, |H| - ( mk + |D| -
ml)\} = \max\{0, |H| -|D| +m(l-k)\}$, which is weakly decreasing in $k$.
\end{proof}
\cref{prop: max match rate updating}  shows that the bottleneck
of imbalanced interview capacities occurs at the interview stage. In this
sense, the preference formation role of interviews is orthogonal to our  main point
 regarding the harm such imbalance causes.

\section{Doctor-optimal Interview Matching}
\label{sec:doct-optim-interv}
The only difference between our model and that of
\citet{EcheniqueGonzalezWilsonYariv:2020} is that we suppose that the
interview matching is 
hospital-optimal rather than 
doctor-optimal. Their modeling choice is natural
for the question they ask as it gives each doctor her \emph{best}
stable set
of interviews. Thus, their result that most doctors match with
hospitals they rank highly can only be
stronger for other interview matchings. For our analysis, the
doctor-optimal interview matching does not have this natural
appeal. To the contrary, hospital-proposing DA is a
reasonable approximation of the interview matching process.
Nonetheless, our results are not driven by this choice. The only proof
that relies on this choice  is that of  \cref{thm: more not better}. In this section, we
show that the result holds even  for the doctor-optimal
interview matching followed by the doctor-optimal final matching. In
what follows, we use the same terminology and notation as before,
with the understanding that the interview matching is doctor-optimal.

As in the statement of the theorem, suppose that for each $d\in D$,
$\k_d \leq \k'_d$. We show below that the \cref{nobetter,oldisnew} hold
even with the change from the hospital-optimal interview matching to
the doctor-optimal interview matching. The key is to
establish that, in the interview phase, 
if a doctor $d$ is rejected by a hospital $h$ under capacities $\k$,
then $h$ rejects her under $\k'$ as well.
Given capacities $\bar \k$, let $A_d(m;\bar{\k})$  be the hospitals that
$d$ proposes to and  $R_d(m;\bar{\k})$ be  the 
hospitals that reject doctor $d$ by the end of round $m$ of the
interview phase. We show that these sets are monotonic in $\overline \k$.

\begin{claim}\label{claim: doctor-doctor}
For any positive integer $m$,
\begin{align*}
R_d(m;\k) &\subseteq R_d(m;\k')\\
A_d(m;\k) &\subseteq A_d(m;\k')
\end{align*}

\end{claim}

\begin{proof}
We proceed by induction on $m$, the base case being
$m=1$. In the first round of the interview phase, each $d\in D$ proposes to her 
favorite hospitals up to her interview capacity.  Since every doctor 
proposes to at least as many hospitals under $\k'$ as under $\k$, every
hospital receives at least as many proposals under $\k'$ as under $\k$.  Therefore, if a doctor $d$
is rejected by a hospital $h$  in the first round of the interview phase under $\k$, she is
also rejected  by $h$ in the
first round under $\k'$.
Now consider a round $m>1$ of the interview phase and suppose for
each doctor $d$ that $R_d(m-1;\k)\subseteq R_d(m-1;\k')$ and
$A_d(m-1;\k)\subseteq A_d(m-1;\k')$.  In round $m$, each $d\in D$ proposes
to her favorite hospitals that have
not yet rejected her  up to her interviewing capacity.  Under $\k$, $d$ proposes to her $\k_d$ favorite
hospitals in $H \setminus R_d(m-1;\k)$.  Under $\k'$, 
she proposes to her $\k'_d$ favorite hospitals in   $H \setminus
R_d(m-1;\k')$.  By the inductive hypothesis, $H \setminus
R_d(m-1;\k')\subseteq H \setminus R_d(m-1;\k)$.  Therefore, if $d$
proposes to $h$ under $\kappa$, either she proposes to $h$  under
$\kappa'$  as well (she
is choosing more hospitals from a smaller set of options) or she
has already proposed to  and has been rejected by $h$ under $\k'$.  In either case, if
$h\in A_d(m;\k)$, then $h\in A_d(m;\k')$.  Since each hospital $h$ 
receives more proposals but its capacity does not change, if
$h$ rejects doctor $d$ under $\k$, she also rejects doctor $d$ when
choosing from a larger set of doctors who have proposed to it under
$\k'$.  Therefore, if $h\in 
R_d(m;\k)$, then $h\in R_d(m;\k')$. 
\end{proof}

We now explain how \cref{claim: doctor-doctor} implies that
\cref{nobetter,oldisnew} hold even when we switch to the
doctor-optimal interview matching. 
Let $\nu$ and $\mu$ be the interview and final matchings respectively,
under $(\i,\k)$.  Similarly, let $\nu'$ and $\mu'$ be the interview 
and final matchings under $(\i,\k')$.

\cref{nobetter} says that for each $d\in D$, if $h\in
\nu'(d) \setminus \nu(d)$, then $\mu(d) \mathrel{P_d} h$.  Given
$\bar \k$, let
$R_d(\bar{\k})$ denote the set of hospitals that reject $d$ in
any round of the interview phase under capacities $\bar{\k}$.
By \cref{claim: doctor-doctor}, 
$R_d(\k) \subseteq R_d(\k')$.
Under $\k$, $\nu(d)$ consists of $d$'s  $\k_d$ most preferred
hospitals in $H\setminus R_d(\k)$.  That is, the $\k_d$ highest-ranked hospitals
that did not reject her.  Under $\k'$, $\nu'(d)$ comprises $d$'s $\k'_d$
most preferred  hospitals in $H\setminus R_d(\k')$.  As $H\setminus
R_d(\k') \subseteq H\setminus R_d(\k)$, if $h'\in \nu'(d)\setminus
\nu(d)$, then for every $h\in \nu(d)$, $h \mathrel{P}_d h'$.  In
words, since $d$ is interviewed by $h$ under $\k$, she was not rejected by $h$
under $\k$.  As
more hospitals rejected $d$ under $\kappa'$ than under $\k$, $h$ does not
reject $d$ under $\k$.  Therefore, $d$ could have proposed to  $h$
under $\k$, but she chose not to.  Therefore, by revealed preference,
she prefers all  hospitals in $\nu(d)$ to any
of her ``new'' interviews under $\k'$ (those in $\nu'(h)\setminus\nu(h)$).

Lemma 3 said that if $d\in v'(h)$, $d' \in v(h)$, and $d' \mathrel{P_h}
d$, then $d' \in v'(h)$.  By \cref{claim: doctor-doctor}, $d'$ proposes
to at least as many hospitals in the interview phase under $\k'$ as
under $\kappa$.  Since $d'$ 
proposes to $h$ under $\kappa$, she also proposes to $h$ under
$\kappa'$.  Each hospital $h$ accepts its $\i_h$ favorite applicants, so if it
accepts $d$, it must also accept $d'$. 

Since \cref{nobetter,oldisnew} hold, the remainder of the
proof follows exactly as in \cref{sec: main}.

\section{Choice Functions for Interview Phase}
\label{sec:choice-funct-interv}

In the interview phase, we compute a many-to-many matching. However,  each doctor and each hospital only
ultimately matches to at most one other partner, and each has strict
preferences 
over partners.
This necessitates the definition of a choice
function over sets of partners. Consistent with the assumption of
non-strategic behavior with complete information, we focus
on acceptant choice functions that are responsive to preferences over
partners and constrained by interview capacity. That is, given $P\in
\mathcal P$,
\begin{itemize}
\item From the set $H'\subseteq H$, each $d\in D$ chooses the $\k_d$
  best elements of $H'$ according to $P_d$:
  \[
    C_d(H') = \left\{
      \begin{array}{ll}
        \{h\in H': h\mathrel P_d d\}& \text{if }|\{h\in H': h\mathrel
                                      P_d d\}|\leq \k_d\text{ and}\\ \\
        B\subseteq \{h\in H': h\mathrel P_d d\}& \text{such that } |B|
                                                 = \k_d \text{ and for
                                                 each }h\in
                                                 B\\
                                    &\text{and each }h'\in H'\setminus
                                      B, h\mathrel P_d h'\text{ otherwise.}
      \end{array}
    \right.
  \]
\item From the set $D'\subseteq D$, each $h\in H$ chooses the $\k_h$
  best elements of $D'$ according to $P_h$:
  \[
    C_h(D') = \left\{
      \begin{array}{ll}
        \{d\in D': d\mathrel P_h h\}& \text{if }|\{d\in D': d\mathrel
                                      P_h h\}|\leq \i_h\text{ and}\\ \\
        B\subseteq \{d\in D': d\mathrel P_h h\}& \text{such that } |B|
                                                 = \i_h \text{ and for
                                                 each }d\in
                                                 B\\
                                    &\text{and each }d'\in D'\setminus
                                      B, d\mathrel P_h d'\text{ otherwise.}
      \end{array}
    \right.
  \]
\end{itemize}

\section{Epilogue: What Actually Happened}
\label{sec:2021-match}

The 2021 ``Match Day''---the day the NRMP announces the results of the
match---was on March 19. In the words of the NRMP President and CEO,
Donna L. Lamb, 
\begin{quote}
  The application and recruitment cycle was upended as a result of the
  pandemic, yet the results of the Match continue to demonstrate
  strong and consistent outcomes for
  participants.\footnote{\href{https://www.nrmp.org/2021-press-release-delivers-strong-residency-match-2/}{NRMP Press
    Release} dated March 19, 2021.}
\end{quote}
Indeed, contrary to our results, there was a 2.6 percent increase in
PGY-1 positions filled.

In this section, we argue that there is reason to be skeptical  about
the claim that the 2021 match is ``strong  and consistent.'' In
particular, we contend that focusing on match rates leads one to miss
an effect that is analogous to our results.

We have assumed that hospitals na\"ively extend 
interview invitations to their most preferred doctors first and that
these trickle down to less preferred doctors. However, hospitals
are not na\"ive in practice and use heuristics.\footnote{Optimal
  strategies depend critically on the probability of matching with a
  doctor conditional on interviewing her
  \citep{ChadeSmith:Econometrica2006}. In our context, these probabilities 
  are dependent not only on other hospitals' preferences and
  strategies  but also on the rest of the hospital's own interview choices. It is
  implausible that hospitals have this information and that they then compute
  the optimal portfolio of doctors to interview.} Most heuristics include an option that has a high
match probability even if it is ranked relatively low.\footnote{Even
  the optimal solution  has this property when being unmatched
is very unattractive relative to matching with lower ranked
candidates 
  \citep{ChadeSmith:Econometrica2006}.} We  call these candidates
``safety'' candidates.
When hospitals offer
interviews  in this way, an increase to $k$ increases
the proportion of hospitals that match  with their safety candidates
rather than the number of unfilled positions. In other words,
hospitals tend to match with lower ranked doctors, the likelihood of
matching for higher tier candidates decreases, and the likelihood of
matching for lower tier candidates increases.

The very limited information that the NRMP has published so far
supports this conjecture. We do not have 
  access to the actual preferences of residency programs, nor their
  submitted rankings. Yet, the reported interview and ranking patterns
  in the
  \href{https://mk0nrmp3oyqui6wqfm.kinstacdn.com/wp-content/uploads/2020/08/2020-PD-Survey.pdf}{2020
    NRMP Program Director Survey} reveal a systematic preference for MD Seniors
  (graduating students from US MD medical schools) over DO Seniors
  (graduating from US DO medical schools)  as well as MD and DO Grads (those who have previously
  graduated from US medical schools but had not matched). Nonetheless,
  the proportion of positions filled by MD Seniors declined slightly while the
  proportions of positions filled by the latter three increased
  slightly.\footnote{See the NRMP Reports for \href{https://mk0nrmp3oyqui6wqfm.kinstacdn.com/wp-content/uploads/2020/06/MM_Results_and-Data_2020-1.pdf}{2020} and \href{https://mk0nrmp3oyqui6wqfm.kinstacdn.com/wp-content/uploads/2021/03/Advance-Data-Tables-2021_Final.pdf}{2021}.} This lends support to our hypothesis that more positions
  were filled by safety candidates. 

  In what follows, we consider a variation of our model and contrast
  the matching patterns generated by the na\"ive behavior and a
  particular heuristic. The goal is to demonstrate the above explanation of how
  increased matches to safety candidates may mask the effects of
  increased interview capacities.

  The  change to the model is that we separate the doctors into
  two tiers---$D_1$ and $D_2$ (so $D_1\cap D_2 = \emptyset$ and $D_1\cup
  D_2 = D$)---such that every hospital prefers every doctor in $D_1$ to
  every doctor in $D_2$.
  We achieve this change by adding $\beta+\gamma + 10$ to $u_h(d)$, as
  specified in \cref{sec:simulations}, for each $d\in D_1$. 
  We have chosen $D_1$ such that $|D_1| =
  \toptier$.

  To model behavior under the heuristic, we use the following choice
  functions for hospitals, as opposed to those defined in
  \cref{sec:choice-funct-interv}.
  From a set
  $D'\subseteq D$, hospital $h\in H$ chooses a safety candidate
from the second tier, if one is available, and fills the remaining slots
by choosing the best options in $D'$. In
order to ensure that the interview matching (the
hospital-proposing Deferred Acceptance)  algorithm terminates, 
 hospitals do not reconsider tentatively accepted interview offers. Thus,
the choice function is parameterized by the number of interview offers
to be made---in each round of the algorithm, the number of doctors a
hospital chooses is its interview capacity minus the number of
tentatively accepted offers in the previous round.
  Thus, when the hospital's preferences over the doctors are
  represented by $P_h$ it chooses $n$ doctors from $D'\subset D$ as
  follows: If $D\subseteq D_1$,
  \[
    C_h(n, D') = \left\{
      \begin{array}{ll}
        \{d\in D': d\mathrel P_h h\}& \text{if }|\{d\in D': d\mathrel
                                      P_h h\}|\leq n \text{ and}\\ \\
        B\subseteq \{d\in D': d\mathrel P_h h\}& \text{such that } |B|
                                                 = n \text{ and for
                                                 each }d\in
                                                 B\\
                                    &\text{and each }d'\in D'\setminus
                                      B, d\mathrel P_h d'\text{ otherwise.}
      \end{array}
    \right.
  \]
  Otherwise, let $d^s$ be chosen uniformly at random from $D'\cap
  D_2$.\footnote{Hospitals randomize their safety choice $d^s$ to
    avoid overlap.} Then,
    \[
    C_h(n, D') = \left\{
      \begin{array}{ll}
        \{d^s\}\cup \{d\in D': d\mathrel P_h h\}& \text{if }|\{d\in D'\setminus\{d^s\}: d\mathrel
                                      P_h h\}|\leq n-1 \text{ and}\\ \\
        \{d^S\}\cup B& \text{where } B\subseteq \{d\in D'\setminus
                      \{d^s\}: d\mathrel P_h h\} \\& \text{is such that } |B|
                                                 = n-1 \text{ and for
                                                 each }d\in
                                                 B\\
                                    &\text{and each }d'\in D'\setminus
                                      (B\cup\{d^s\}), d\mathrel P_h d'\\&\text{otherwise.}
      \end{array}
    \right.
  \]
  We first observe that under the heuristic, the effect that  increasing 
 $k$ has on the match rate all but disappears. In \cref{fig: k v match
   rate naive,fig: k v match rate heuristic}
 we show the effects for the na\"ive behavior and the heuristic
 respectively.\footnote{All parameter values are the same as in \cref{sec:simulations}.}
\begin{figure}
  \centering
  \begin{subfigure}[b]{0.49\textwidth}
    \centering
    \ifimages
    \includegraphics[scale=0.45]{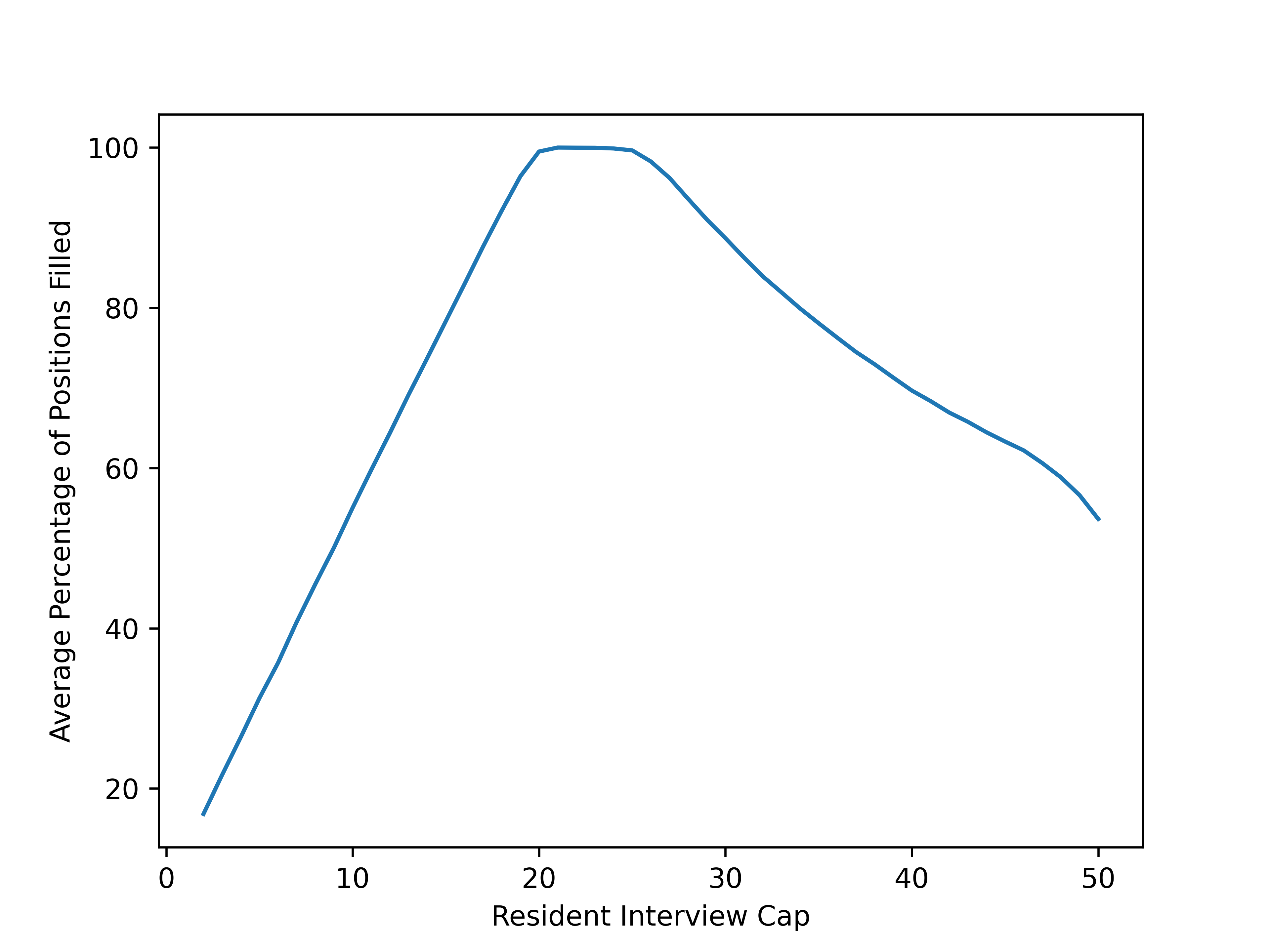}
    \fi 
    \caption{Na\"ive choices }
    \label{fig: k v match rate naive}
  \end{subfigure}
  \hfill
  \begin{subfigure}[b]{0.49\textwidth}
    \centering
    \ifimages
    \includegraphics[scale=0.45]{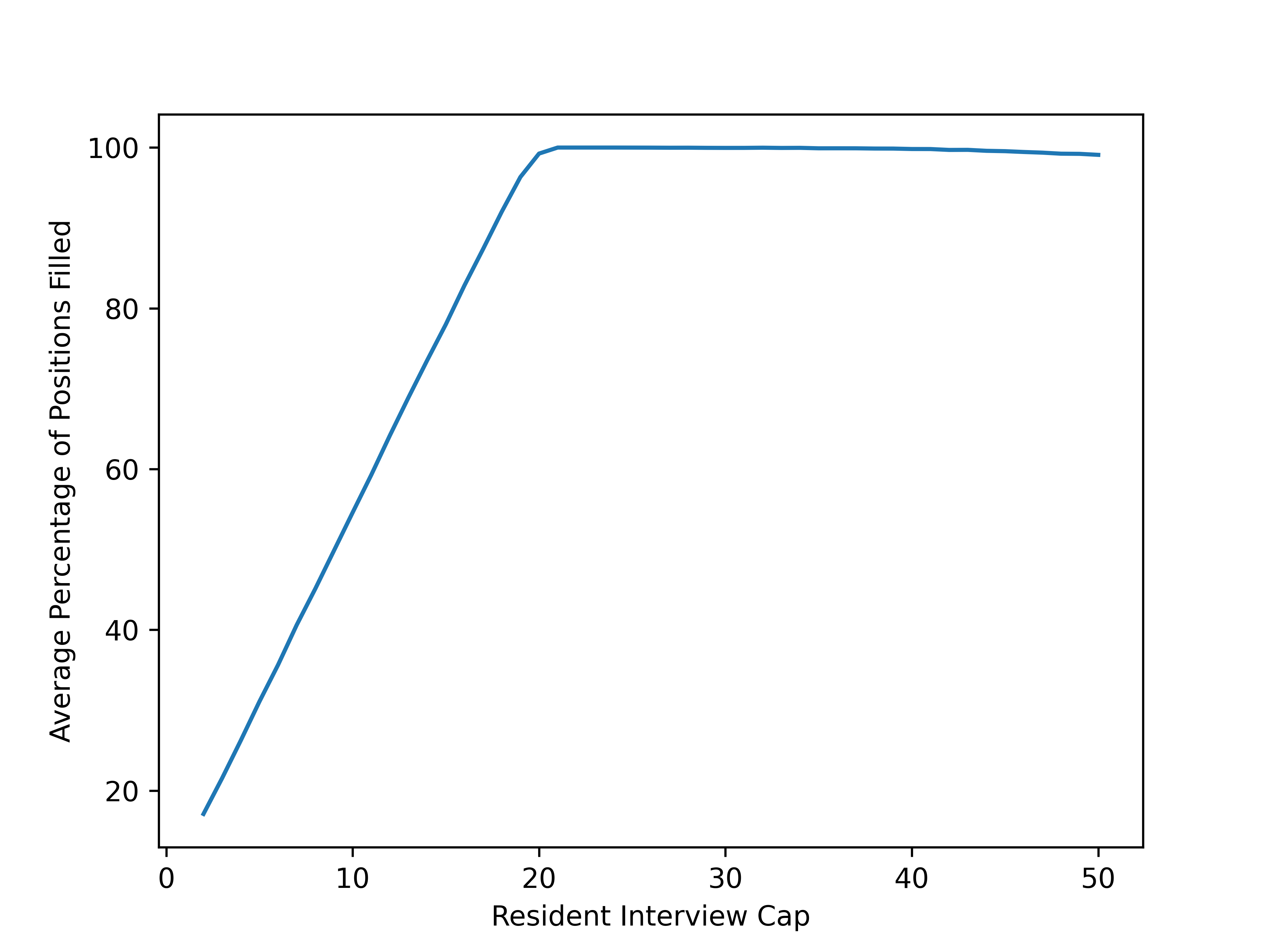}
    \fi
    \caption{Heuristic choices}\label{fig: k v match rate heuristic}
  \end{subfigure}
  \caption{We vary $k$ from $\simlowk$ to $\simhik$ with $l$ fixed at
    $\simhospcap$ and display the corresponding match rate for na\"ive
  behavior by the hospitals, as well as the heuristic where each choice
  includes a safety candidate.}
\end{figure}
The dramatic difference in the match rates is entirely accounted for
by the number of positions that are filled by second tier
candidates. In other words, matching with lower ranked candidates is
the alternative to being unmatched. This is easily seen in the
comparison between \cref{fig: tier match rate naive,fig: teir match rate heuristic}.
\begin{figure}
  \centering
  \begin{subfigure}[b]{0.49\textwidth}
    \centering
    \ifimages
    \includegraphics[scale=0.45]{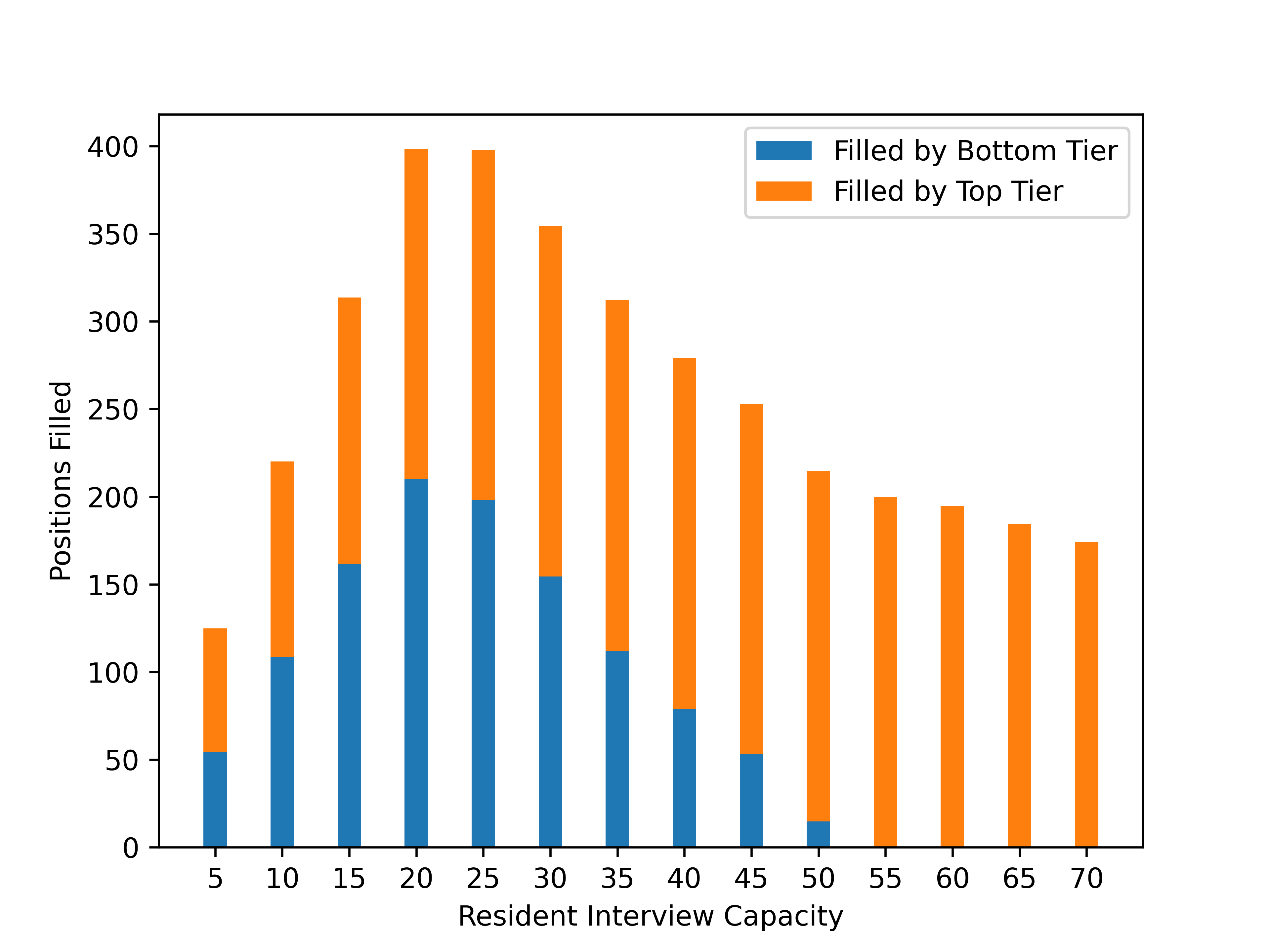}
    \fi 
    \caption{Na\"ive choices }
    \label{fig: tier match rate naive}
  \end{subfigure}
  \hfill
  \begin{subfigure}[b]{0.49\textwidth}
    \centering
    \ifimages
    \includegraphics[scale=0.45]{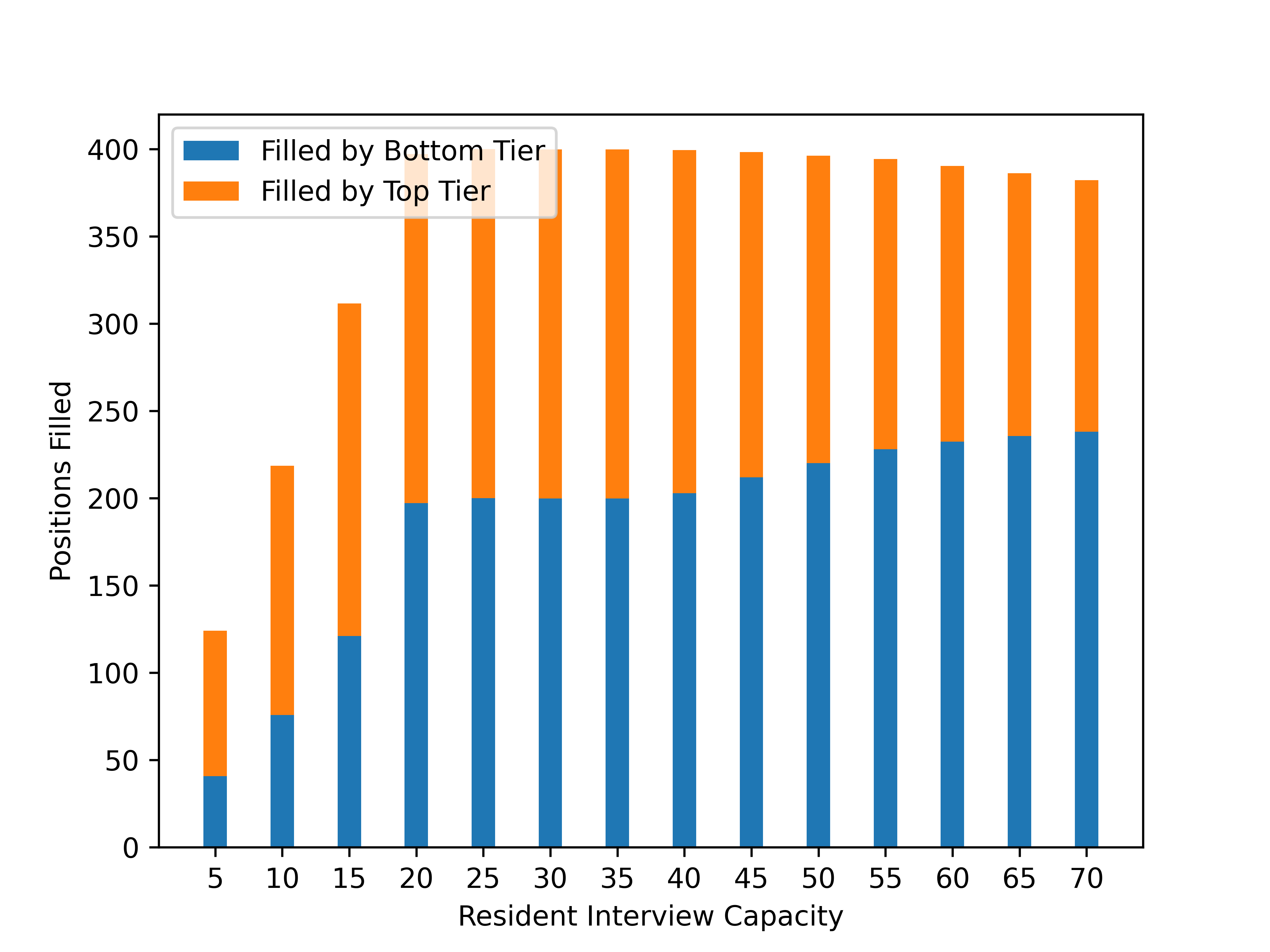}
    \fi
    \caption{Heuristic choices}\label{fig: teir match rate heuristic}
  \end{subfigure}
  \caption{For a range of values of $k$, with $l$ fixed at
    $\simhospcap$, we see that  under na\"ive choices more positions
     are unfilled as $k$ increases. Under the heuristic, the match
     rate does not drop substantially, but positions are
     filled by lower ranked candidates. In either case, the number of
     positions filled by top tier candidates decreases once $k$
     exceeds $l$, which
     is consistent with our theory.}
\end{figure}

Other than the match rate, the remaining patterns that we presented in
\cref{sec:simulations} persist with our modified model under the
heuristic for hospitals' choices.
Under the heuristic, lower ranked doctors who would be unmatched at
lower $k$ gain from the misallocation of interviews among higher ranked
doctors. Nonetheless, more doctors prefer the outcome under a cap of
$k=l$ than prefer the benchmark with no cap as shown in
\cref{fig: tier residents cap preference}. As noted above, hospitals tend to match with
lower ranked doctors and therefore tend to be worse off, as
shown in \cref{fig: tier hospitals cap preference}. The result, as before, is driven by interview
hoarding, which can be seen from the comparison of  interview
distributions with a cap of $k=l$ and with no cap in \cref{fig: tier
  interview dist}.

\begin{figure}
  \centering
  \begin{subfigure}[b]{0.49\textwidth}
    \centering
    \ifimages
    \includegraphics[scale=0.45]{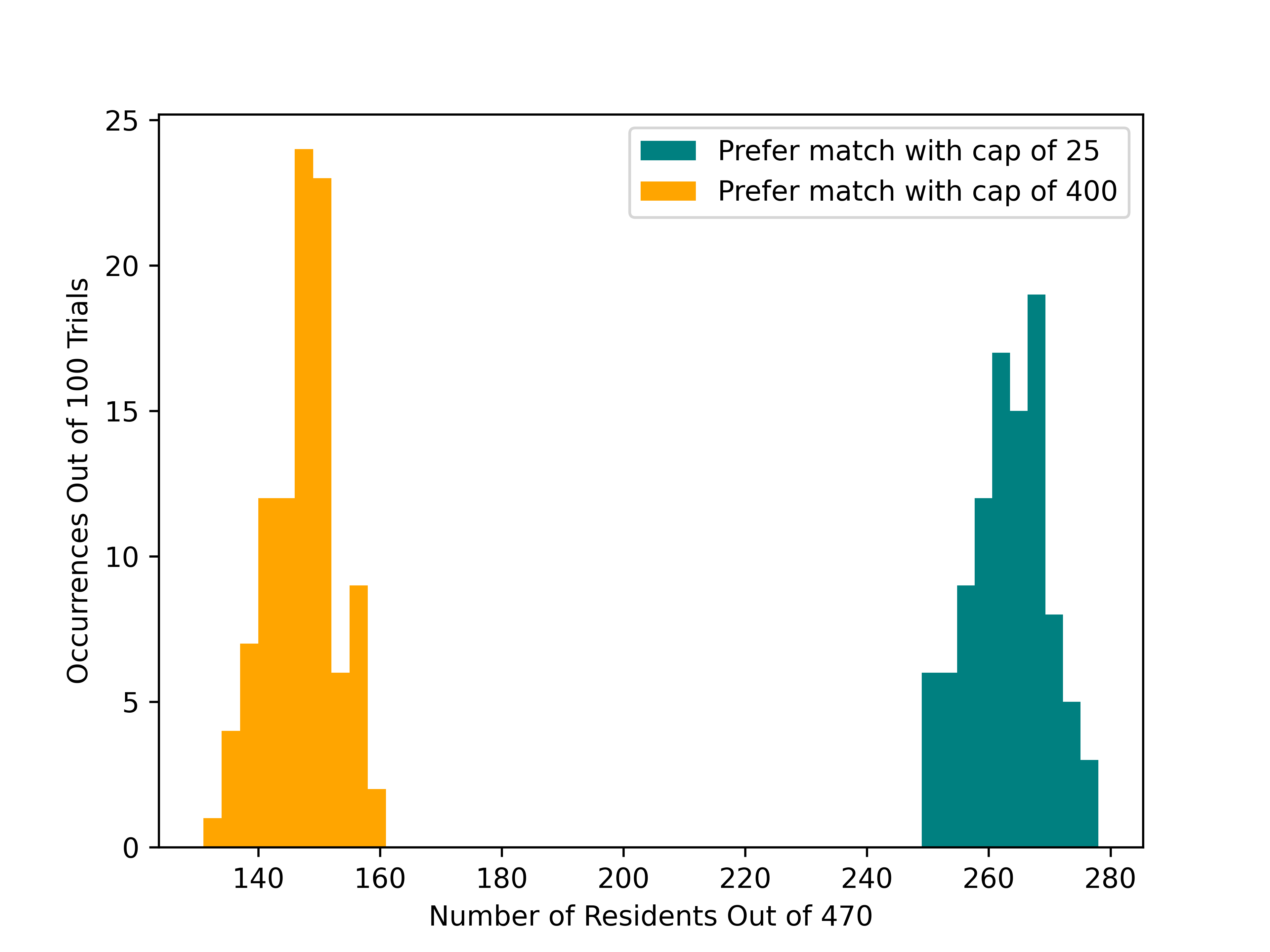}
    \fi
    \caption{Distribution of the number of doctors who prefer their
      match at $k = l$ over being unconstrained and vice versa.}
    \label{fig: tier residents cap preference}
  \end{subfigure}
  \hfill
  \begin{subfigure}[b]{0.49\textwidth}
    \centering
    \ifimages
    \includegraphics[scale=0.45]{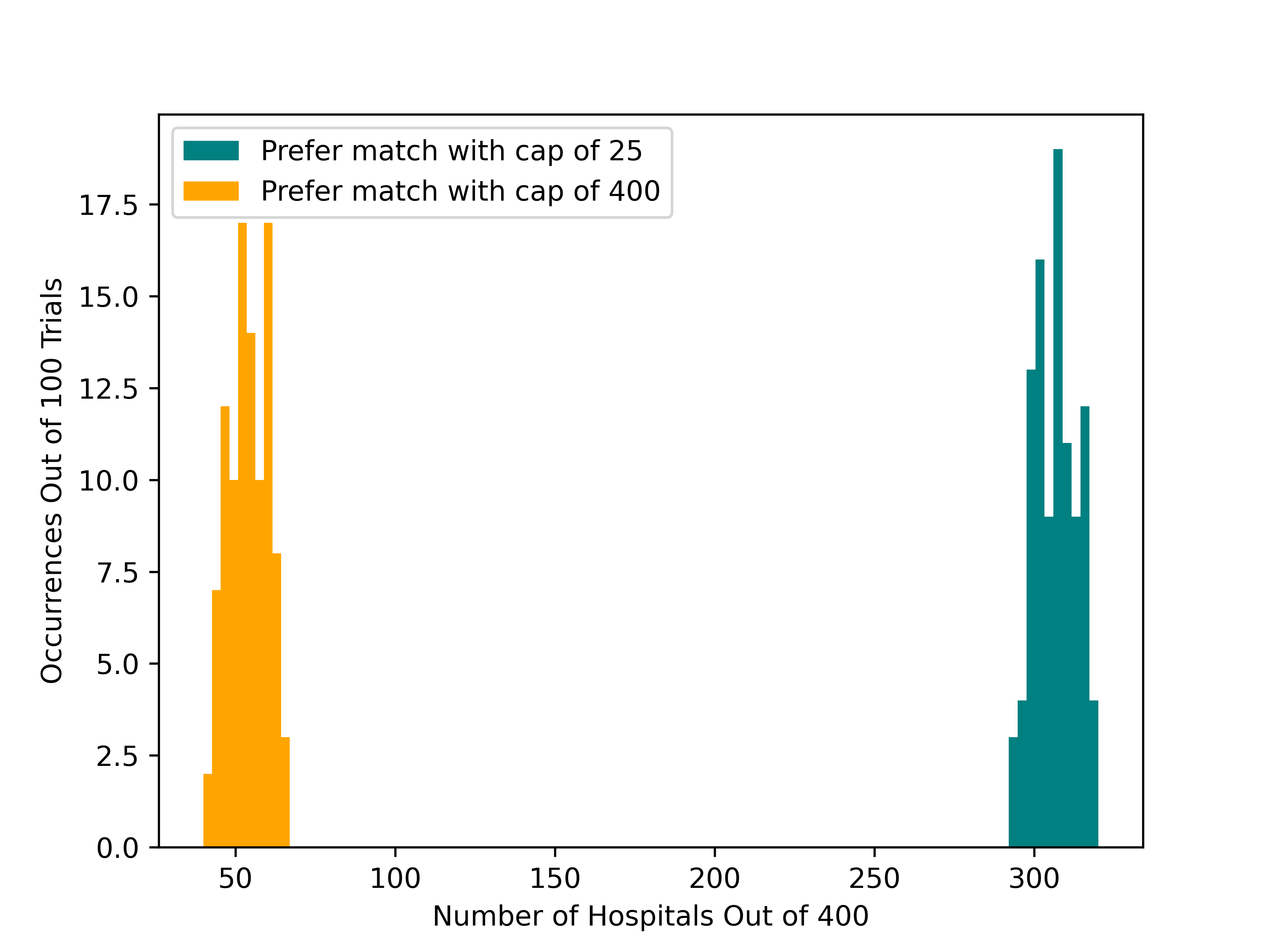}
    \fi
    \caption{Distribution of the number of hospitals that prefer their
      match at $k =l$ over the doctors being unconstrained
      and vice versa. } 
    \label{fig: tier hospitals cap preference}
  \end{subfigure}
  \hfill
  \begin{subfigure}[b]{0.49\textwidth}
    \centering
    \ifimages
    \includegraphics[scale=0.45]{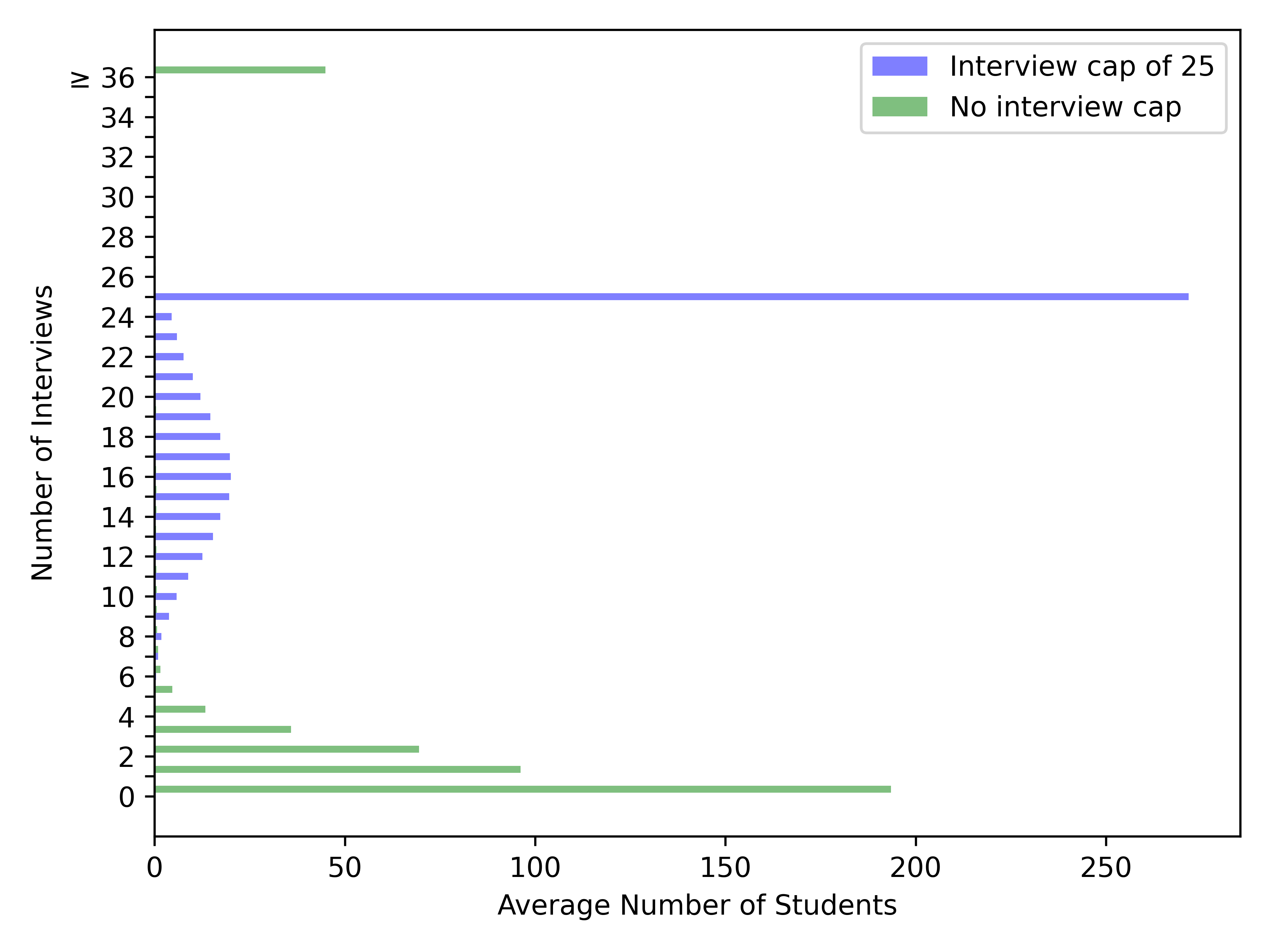}
    \fi
    \caption{Distributions of interviews at $k=l$ and without a
      cap. The uncapped distribution vanishes when the
    number of interviews reach the hundreds.}
    \label{fig: tier interview dist}
  \end{subfigure}
  \caption{Results for the tiered model with heuristic choices for the
    hospitals that parallel our results from \cref{sec:simulations}. }
\end{figure}
 
\section{Robustness of Simulations: Choice of Random Utility Model and Parameters}
\label{apx:sim}

In \cref{sec:simulations}, we have adopted the random utility model of
\citet{AshlagiKanoriaLeshno:JPE2017}. Moreover, we
have presented our results for fixed parameter values: 
$\beta=\simbeta$,  $\gamma=\simgamma$, $l=25$, $|D| = 470$, and $|H| =
400$.  In this appendix, we discuss the robustness of our findings
with regards to these choices.

We start with the random utility model. A very natural alternative is
that of  \citet{Lee:REStud2017}, which is the model that
\citet{EcheniqueGonzalezWilsonYariv:2020}  have adopted. This model
does not include a fit component, so a doctor's utility from being
matched to a hospital is a convex combination of a common (across all
doctors) value and an idiosyncratic (to that doctor) value. A
hospital's utility from being matched to a doctor is similarly
comprised. Thus, each 
hospital $h\in H$ has a common component to its quality, $x_h^C$, and
each doctor has a common component to her quality, $x^C_d$. Aside from
this, for each doctor-hospital pair, $d$ and $h$, $\varepsilon_{dh}$
is the idiosyncratic value that $d$ assigns to $h$ and
$\varepsilon_{hd}$ is the idiosyncratic value that $h$ assigns to
$d$. Then, the utilities that $h$ and $d$ enjoy from being matched to one
another are
\[
  \begin{array}{c}
  u_h(d) = \alpha x_d^C+ (1-\alpha)    \varepsilon_{hd}\\
\text{ and }\\
  u_d(h) = \alpha x_h^C  + (1-\alpha)  \varepsilon_{dh},
  \end{array}
  \]
respectively.

The random variables 
$x_h^C$, $x_d^C$, $\varepsilon_{dh}$, and $\varepsilon_{hd}$ are all
 independently drawn from the uniform distribution over $[0,1]$.  The
 coefficients $\alpha$ and $(1-\alpha)$ are weights on the common and
 idiosyncratic components, respectively.

To make an apples-to-apples comparison between the two models, we
set $\gamma=0$ since the \cite{Lee:REStud2017} model does not have a fit
component. The remaining difference is the distribution of the
idiosyncratic components: the standard logistic distribution in one and
the uniform distribution over $[0,1]$ in the other. Though these
distributions have different supports, we can relate the two models by
considering, for each pair $d,d'\in D$ and each $h\in H$, the degree of
correlation between $u_d(h)$ and $u_{d'}(h)$ as given by the Pearson
correlation coefficient.\footnote{Given the symmetry of both
  models, we could equivalently state this as the correlation between
  $u_h(d)$ and $u_{h'}(d)$ for
  for each pair $h,h'\in H$ and each $d\in D$.} For the \cite{Lee:REStud2017}
model with parameter $\alpha$, it is $\nicefrac{\alpha^2}{\alpha^2 +
  (1-\alpha)^2}$. For the
\cite{AshlagiKanoriaLeshno:JPE2017} model with parameter $\beta$ (and
$\gamma = 0$), it is $\nicefrac{\beta^2}{\beta^2 + (2\pi)^2}$. Thus,
utilities have the same linear correlation in the two models when
\[
  \alpha = \frac{\beta}{\beta+2\pi}.
\]We display this relationship in \cref{fig: alpha beta}. The value of
$\simbeta$ that we have chosen for  $\beta$  in
\cref{sec:simulations} corresponds to an $\alpha$ of
$\simequivalpha$.

\begin{figure}
  \centering
    \ifimages
    \includegraphics[scale=0.4]{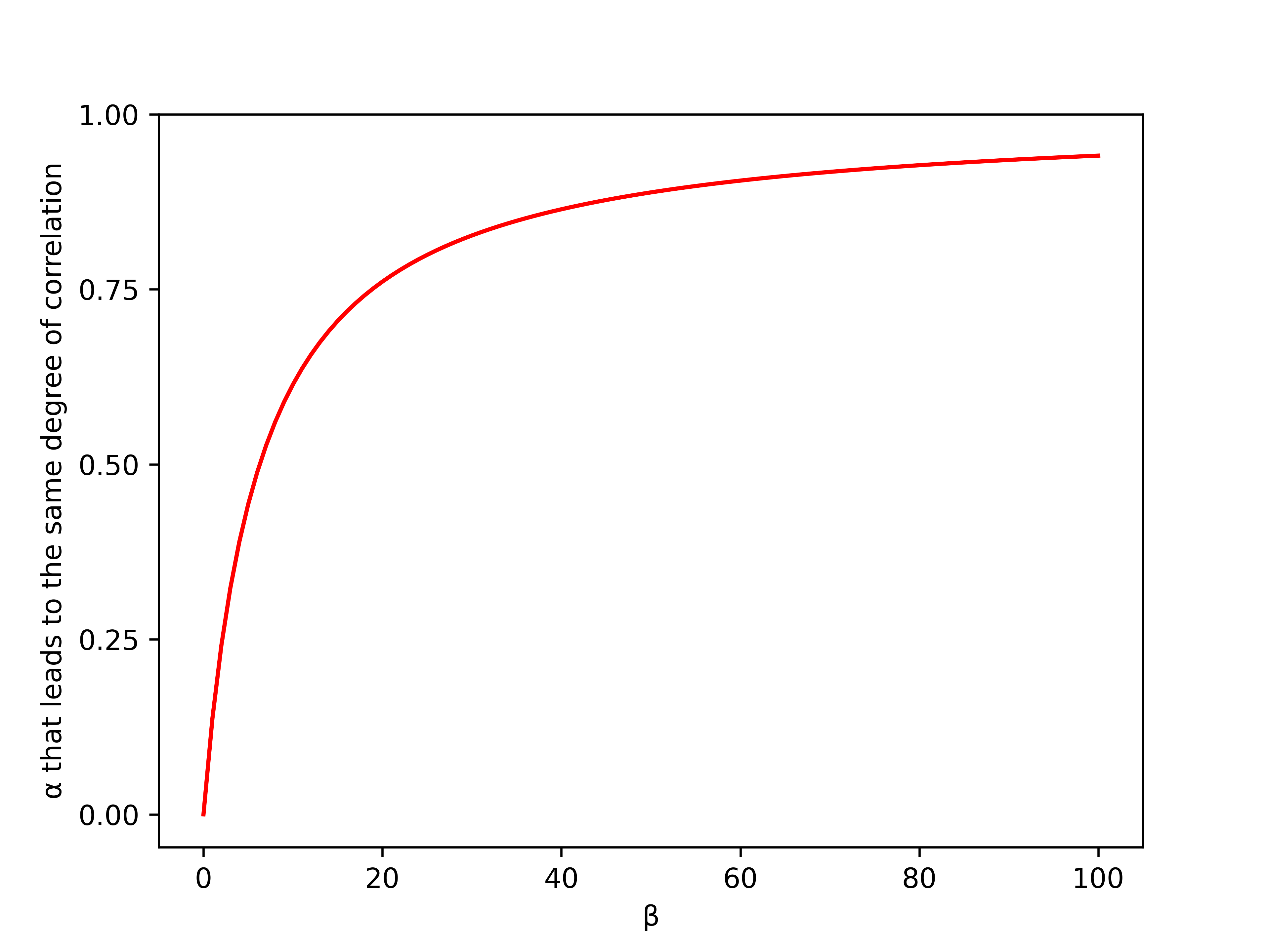}
    \fi
    \caption{Locus of $\alpha$ and $\beta$ pairs that result in the same
    linear correlation between two doctors' (hospitals') utilities from
    the same hospital (doctor) in the models of \citet{Lee:REStud2017} and
    \citet{AshlagiKanoriaLeshno:JPE2017}, respectively}. \label{fig: alpha beta}
\end{figure}

Now consider  $\gamma$. The higher $\gamma$ is, the more aligned
preferences are \emph{across} the two sides of the market. However,
comparing two doctors, when $\gamma > 0$, the further apart they
are in terms of their fit component, the less correlated their
utilities are. Thus, since $\nicefrac{\beta^2}{\beta^2 + (2\pi)^2}$ is
the correlation between the utilities of two doctors with the same fit
component, it is an upper bound on the correlation between the
preferences of any two doctors.
Our chosen value of $\gamma$ is on the same order of
magnitude as $\beta$.

We intend for our simulation results to be suggestive of how the insights
from our analytical results may extend beyond the assumptions that we
make for the sake of tractability. To this end, none of the choices
that we have made are critical in driving the effects we describe in
\cref{sec:simulations}. We consider each of these choices in turn and
make this point by focusing on how the match rate responds to different
values of $k$ as shown in
\cref{fig: k v match rate}.  In other words, we show how this
relationship changes as we vary our model and parameter choices. In
what follows, we vary only one choice at a time, leaving fixed the
other parameters as in \cref{sec:simulations}.

In \cref{fig: match rate alpha}, we consider  various values of
$\alpha$ and use  the  \citet{Lee:REStud2017} model.
As long as there is at least moderate correlation in preferences, we observe  the  effect, albeit with varying magnitude. Indeed,
as the correlation between utilities grows, the effect becomes
stronger, eventually converging to the one described by \cref{prop:
  target interview cap} when $\alpha = 1$.
\begin{figure}
    \centering
    \ifimages
    \includegraphics[scale=0.65]{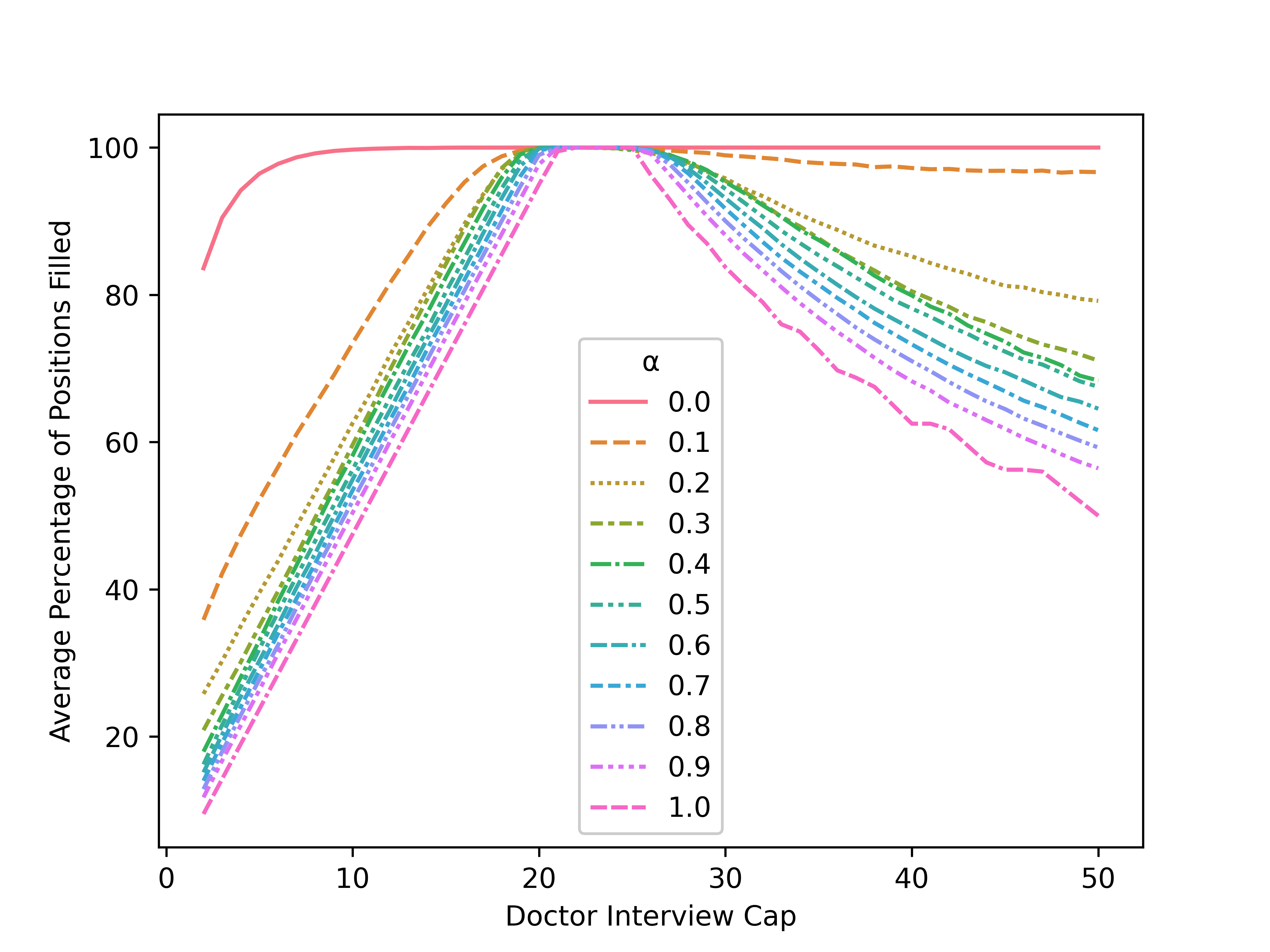}
    \fi
    \caption{The average match rates for each $k$ under the random
      utility model of \citet{Lee:REStud2017} at various values of
      $\alpha$.}
    \label{fig: match rate alpha}
\end{figure}
\begin{figure}
    \centering
    \ifimages
    \includegraphics[scale=0.65]{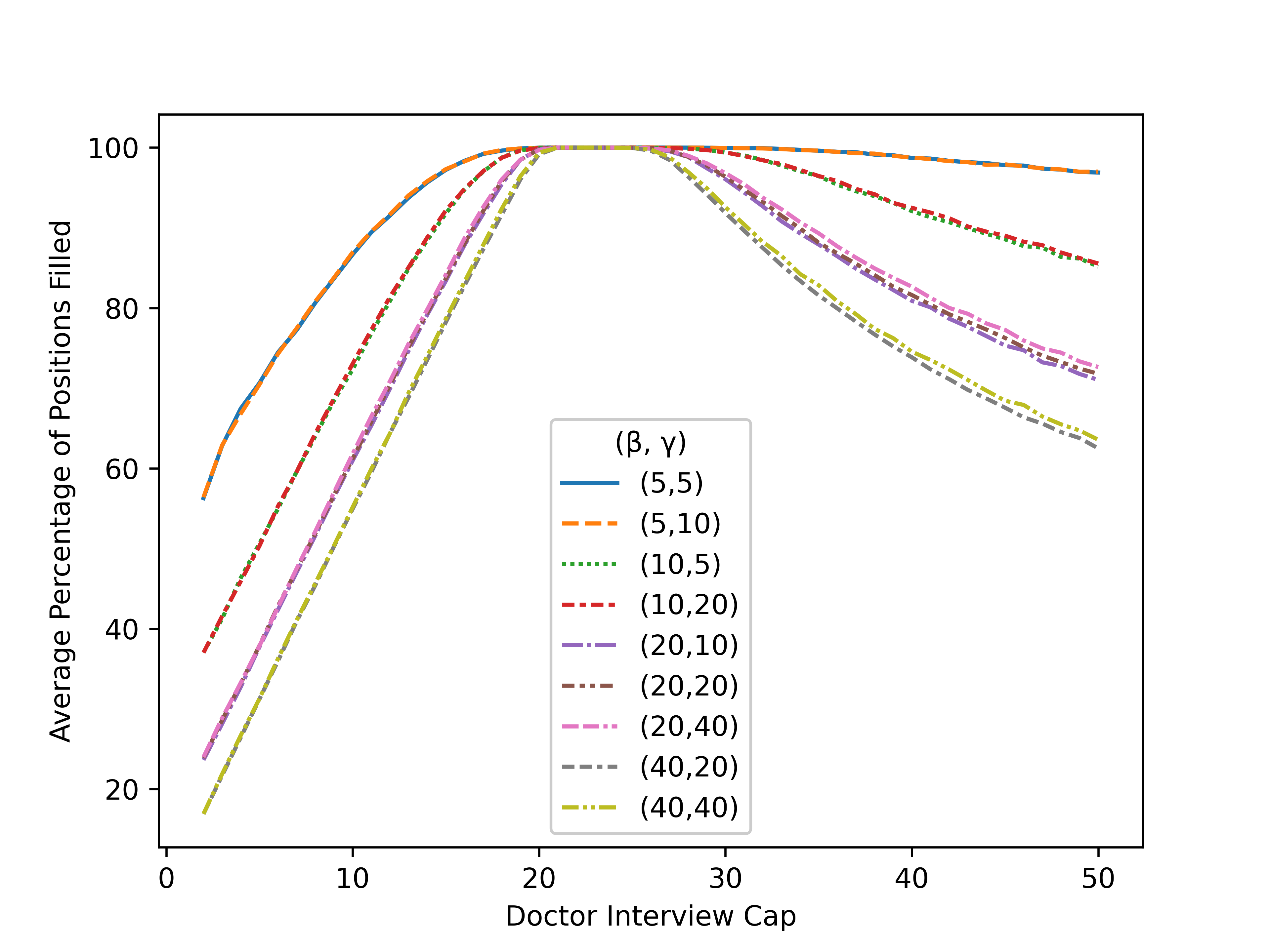}
    \fi
    \caption{The average match rates for each $k$ at various values of
      $\beta$ and  $\gamma$.}
    \label{fig: match rate beta gamma}
  \end{figure}

In  \cref{fig: match rate beta gamma}, we consider various
combinations of $\beta$ and $\gamma$. Again, as $\beta$ grows, the
correlation in preferences increases and the
effect becomes stronger. However, as discussed above, $\gamma >0 $ has
the effect of decreasing the correlation between utilities of agents on the same
side on average. Consistent with this understanding, we see that for a
fixed value of $\beta$, the lower $\gamma$ is, the stronger the effect.

  The next parameter we vary is the number of doctors. Recall that we
 fixed the number of hospitals at $\simnumhosps$ for the simulations
 reported in \cref{sec:simulations}. In \cref{fig: match rate num
   docs}, holding the number of hospitals constant, we  vary the number of
 doctors. The effect
 that $k$ has on the match rate is consistent with the rest of our results.\footnote{For the cases where there are fewer than $\simnumhosps$
 doctors, the  match rate is necessarily lower than  100\%.}

\begin{figure}
    \centering
    \ifimages
    \includegraphics[scale=0.65]{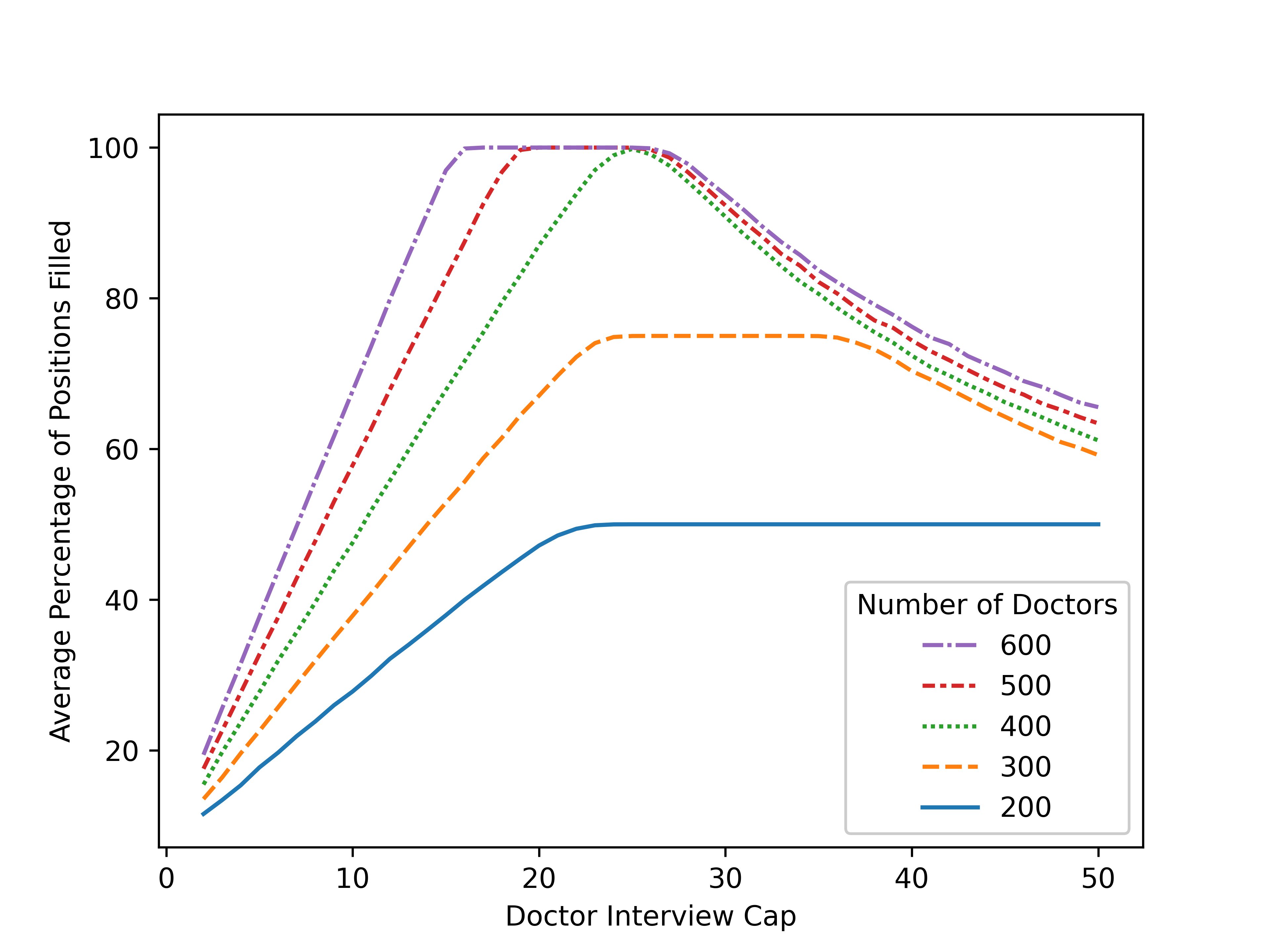}
    \fi
    \caption{The average match rates for each $k$ for different
      numbers of doctors when there are $\simnumhosps$ hospitals.}
    \label{fig: match rate num docs}
  \end{figure}

  Keeping the ratio of hospitals to doctors the same, we next
  proportionally vary the
  size of the market in \cref{fig: match rate market size}. The
  magnitude of the effect increases very slightly with market size,
  but there is no other change.
\begin{figure}
    \centering
    \ifimages
    \includegraphics[scale=0.6]{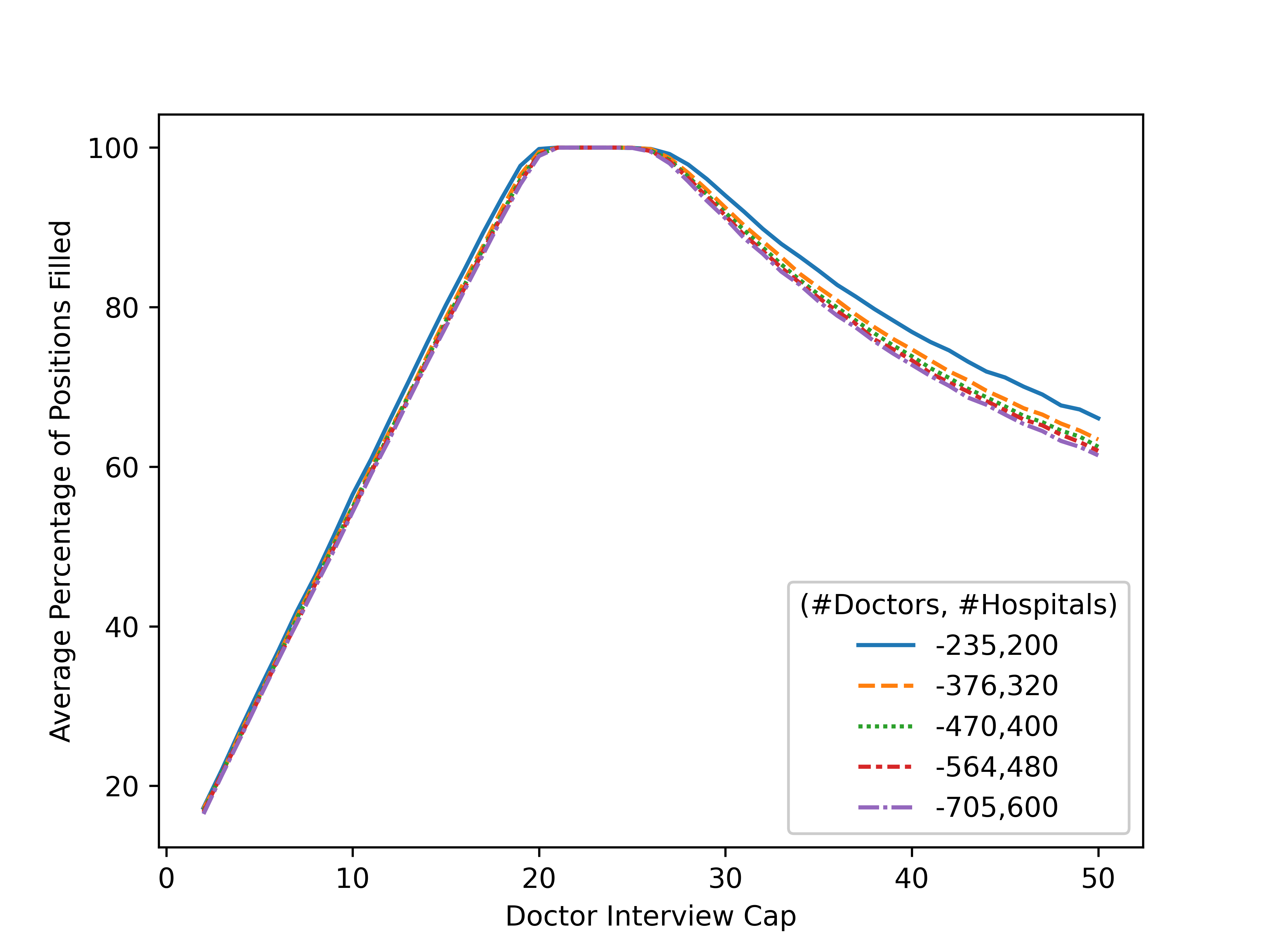}
    \fi
    \caption{The average match rates for each $k$ for different
      market sizes with the ratio of hospitals to doctors fixed at
      $\frac{\simnumhosps}{\simnumdocs}$.}
    \label{fig: match rate market size}
  \end{figure}

  Finally, we consider the value of $l$ that we have fixed at
  $\simhospcap$ for the simulations in \cref{sec:simulations}.
  The lesson of \cref{prop: target interview cap} that  an optimal
  policy is to set $k=l$ holds regardless
  of $l$ (\cref{fig: match rate l}). However, the magnitude diminishes
  with $l$ as see in \cref{fig: k minus l}. Notably, 
the match rate does decrease as
  $k$ becomes larger than $l$.

\begin{figure}
  \centering
  \begin{subfigure}[b]{0.49\textwidth}
    \centering
    \ifimages
    \includegraphics[scale=0.45]{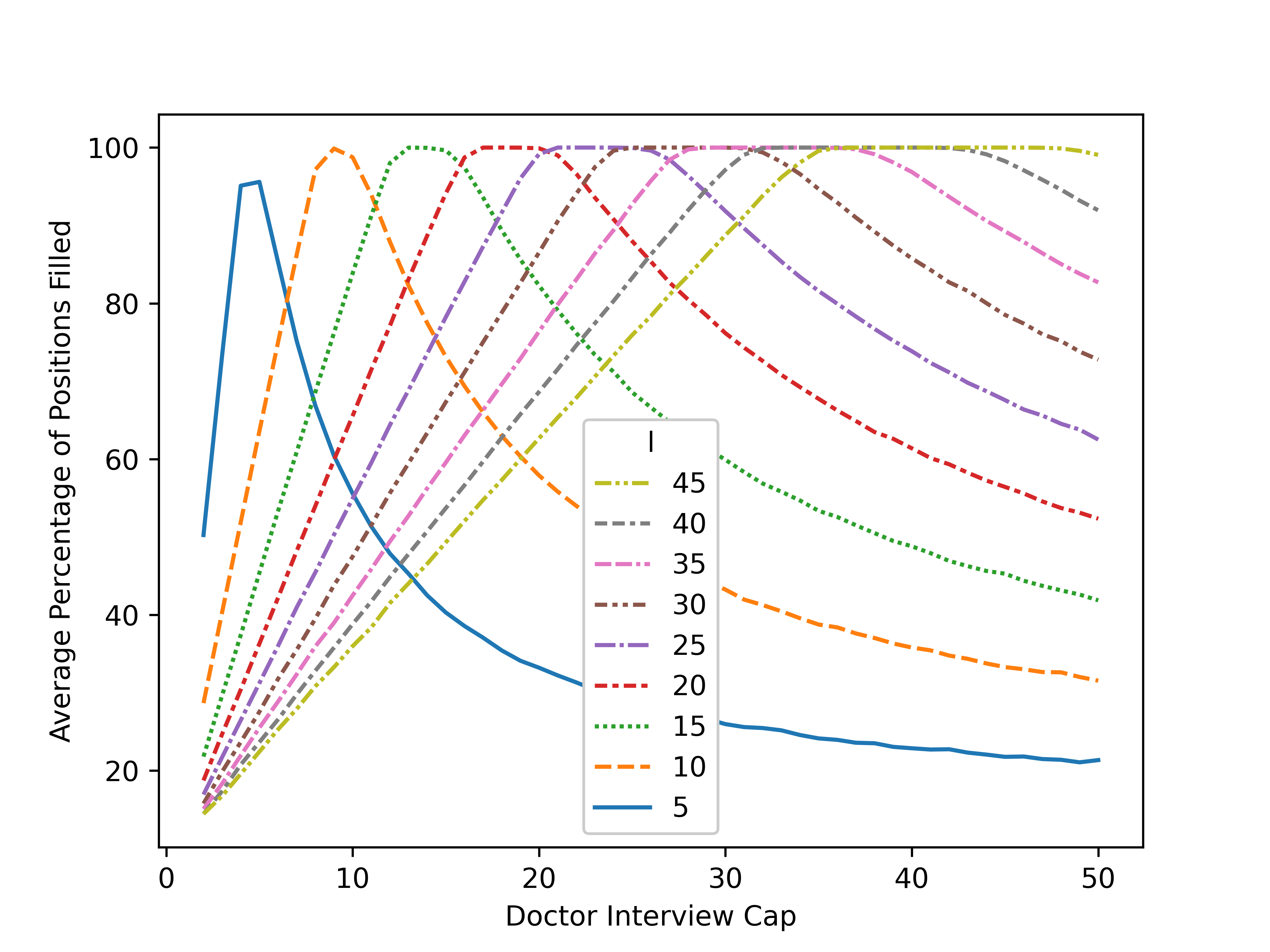}
    \fi
    \caption{The average match rates for each $k$ for different
      values of $l$, the hospitals' interview capacity. \\\hspace{\textwidth}
      } 
    \label{fig: match rate l}
  \end{subfigure}
  \hfill
  \begin{subfigure}[b]{0.49\textwidth}
    \centering
    \ifimages
    \includegraphics[scale=0.45]{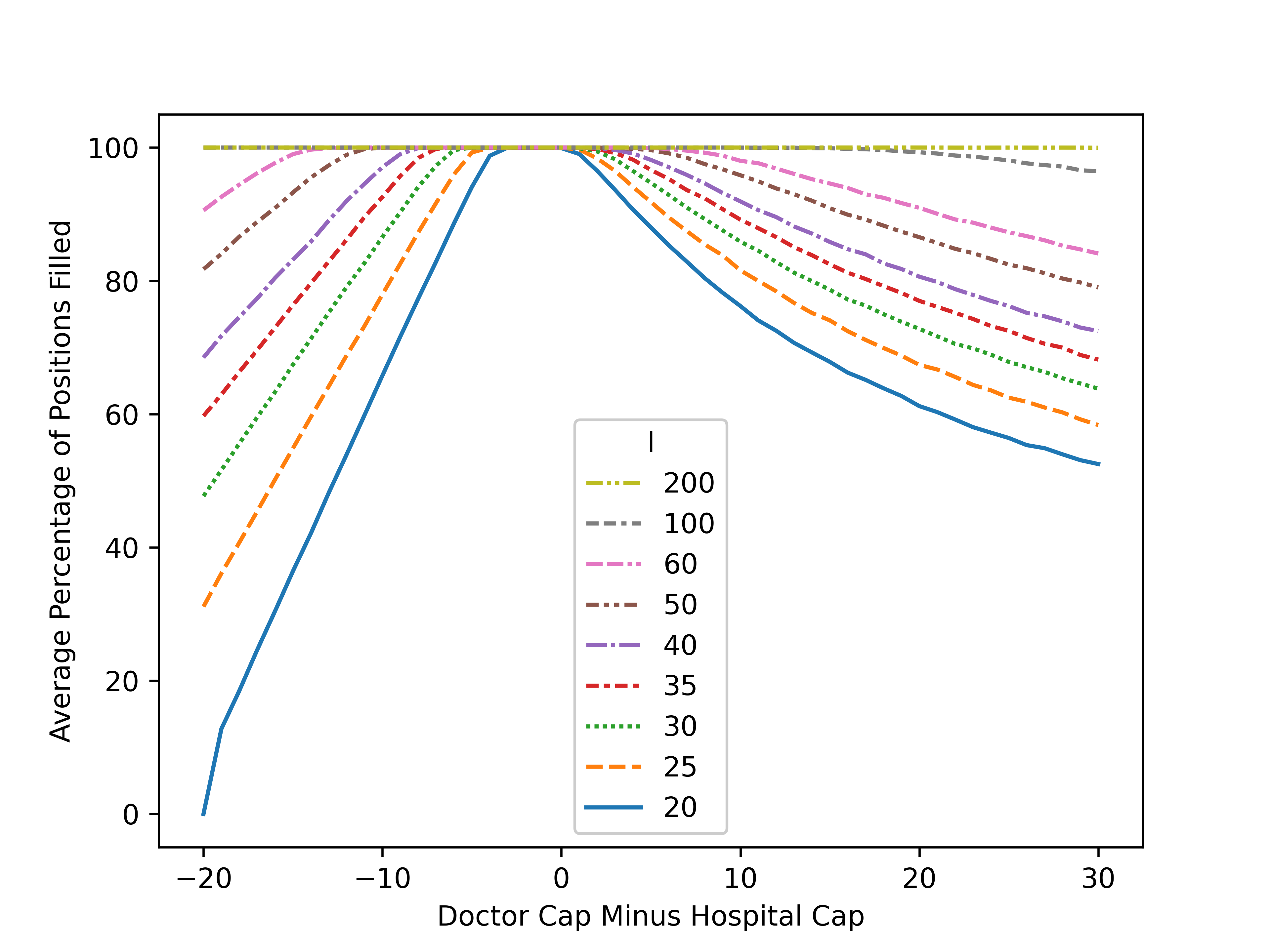}
    \fi
    \caption{The average match rate for different $l$s and $k$s. The
      horizontal axis shows $k-l$. The magnitude of the effect of $k$
      exceeding $l$ diminishes with $l$. } 
    \label{fig: k minus l}
  \end{subfigure}
\caption{The effect of increasing $k$ for different
  values of $l$. }
\end{figure}

 \end{document}